\newtheorem{theorem}{Theorem}
\newtheorem{lemma}{Lemma}
\newtheorem{proposition}{Proposition}
\newtheorem{corollary}{Corollary}
\begin{document}

\title{Neighbor connectivity of $K$-ary $N$-cubes}

\author{Tom\'a\v{s} Dvo\v{r}\'ak \and Mei-Mei Gu}
\address{Faculty of Mathematics and Physics, Charles University, Prague, Czech Republic}
\email{dvorak@ksvi.mff.cuni.cz, mei@kam.mff.cuni.cz}

\begin{abstract}
The neighbor connectivity of a graph $G$ is the least number of vertices such that removing their closed neighborhoods from $G$ results in a graph that is  disconnected, complete or empty. If a~graph is used to model the topology of an interconnection network, this means that the failure of a network node causes failures of all its neighbors. 
We completely determine the neighbor connectivity of $k$-ary $n$-cubes for all $n\ge1$ and $k\ge2$.  
\end{abstract}

\keywords{Cayley graph, hypercube, $k$-ary $n$-cube, neighbor connectivity}

\subjclass[2010]{05C40, 05C25, 68M10, 68R10}

\maketitle

\thispagestyle{empty}

\section{Introduction}

The design and analysis of interconnection networks, originally incited by applications in telecommunication and computer networks, 
has become quite a~pervasive topic in both theoretical and applied research.
Among the problems arising in the course of network design, special attention has been paid to the aspect of fault-tolerance: Can the network preserve its functionality even if certain nodes become overloaded or faulty?

The topology of an interconnection network is usually modeled by an undirected graph whose vertices and edges represent network nodes  and communication links between them, respectively. As the simplest measure of fault-tolerance we can  use the graph-theoretic concept of {\em connectivity} $\kappa(G)$ of the underlying graph $G$: the least number of vertices  of $G$ whose removal leaves the resulting graph disconnected or trivial. There are, however, further refinements to this concept that better grasp the vulnerability or robustness of the network.

The \emph{neighbor connectivity} of a graph $G$, denoted by $\kappa_{NB}(G)$, is defined as the least number of vertices such that removing their closed neighborhoods from $G$ results in a graph that is  disconnected, complete or empty. If a~graph is used to model an interconnection network, this means that the failure of a network node causes failures of all its neighbors. The concept was introduced by Gunther and Hartnell  \cite{Gu} in the context of modeling spy networks; with Nowakowski \cite{G2} they proved that $\kappa_{NB}(G)\le\kappa(G)$ for every graph $G$. Doty  showed
that the problem to decide whether $\kappa_{NB}(G)\le k$ for a~given graph $G$ and integer $k$ is  NP-complete \cite{D2}. She also characterized Cayley graphs of neighbor connectivity one in terms of algebraic properties of the generating set \cite{D2}, sharpened the upper bound for abelian Cayley graphs and  formulated an open problem \cite{D1}: Is it true that $\kappa_{NB}(G)\le\lceil\frac{\delta}2\rceil$ for all $\delta$-regular abelian Cayley graphs except cycles? 
Our main result shows that this bound is tight for a class of abelian Cayley graphs formed by the $k$-ary $n$-cubes.

The concept of neighbor connectivity  was generalized to its edge version by Cozzens and Wu \cite{CoWu}: the minimum number of edges such that the removal of their closed neighborhoods results in an empty, trivial, or disconnected graph is called the \emph{edge neighbor connectivity}, denoted by $\lambda_{NB}(G)$.
Up to now, the exact values of $\kappa_{NB}(G)$  or $\lambda_{NB}(G)$ are known only for a~few classes of graphs.
Shang et al. \cite{S2} determined the (edge) neighbor connectivity of alternating group networks $AN_n$, star graphs $S_n$ and Cayley graphs generated by transposition trees $\Gamma_n$, showing that $\kappa_{NB}(AN_n) = n-1$ for $n\ge4$ while $\lambda_{NB}(AN_n) = n-2$ and $\kappa_{NB}(S_n) = \lambda_{NB}(\Gamma_n) = n -1$ for $n\ge3$. Zhao et al \cite{Z1} obtained the exact values of edge neighbor connectivity for the  class of  hypercubes $Q_n$ and  Cartesian product of complete graphs $K_n\, \square\, K_2$, namely $\lambda_{NB}(Q_{n})=n-1$ and $\lambda_{NB}(K_n\,\square\, K_2)=\lceil\frac{n}2\rceil$  for $n\geq 3$. 
Gunther and Hartnell  \cite{GH91} showed  that $\kappa_{NB}(K_{n}\,\square\, K_{n})=n-1$. 

In this paper we focus on a class of Cayley graphs formed by the $k$-ary $n$-cubes. Properties such as vertex-transitivity or hierarchical structure predetermine these graphs to serve as a~prospective candidate for the interconnection network design, and they have been widely studied from this point of view, see e.g. \cite{D97,G14,H12,Liu19}. Here we add another piece to the puzzle and completely determine the neighbor connectivity of $k$-ary $n$-cubes for all $n\ge1$ and $k\ge2$. 

The rest of the paper is laid out as follows. The next section introduces notation, terminology and surveys auxiliary results that shall be needed later. Section~\ref{sec:hypercubes} provides lower bounds on the neighbor connectivity of hypercubes while Section~\ref{sec:general} settles the general case of $k$-ary $n$-cubes with $k\ge3$. Finally, Section~\ref{sec:main} completes the picture by adding an upper bound applicable to all abelian Cayley graphs whose generating set satisfies certain minimal condition. The previous results are then summarized  into the main theorem showing that the upper bound is tight for the class of $k$-ary $n$-cubes, up to some trivial exceptions. The paper is concluded with brief suggestions for future research.  
\section{Preliminaries}

In this section we introduce the terminology and notations used throughout the paper. The graph-theoretic concepts undefined below may be found e.g.~in \cite{B}.
In the rest of this text, $n$ always denotes a positive integer while $[n]$ stands for the set $\{0,1,\dots, n-1\}$. 
As usual, we use $V(G)$ and $E(G)$ for the vertex and edge sets of a graph $G$. Given a set $S\subseteq V(G)$, an \emph{open neighborhood} $N(S)$ of $S$ is the set of all neighbors of vertices of $S$ in $G$, a \emph{closed neighborhood} $N[S]$ of $S$ is the set $S\cup N(S)$, and $G-S$ stands for the subgraph of $G$ induced by $V(G)\setminus S$. 
If $G$ is a regular graph, we use $\delta(G)$ to denote the degree of vertices of $G$.

\subsection{Paths}

A~sequence $(x_1$,$x_2$,$\dots,x_{n})$ of pairwise distinct vertices such that any two consecutive vertices are adjacent is a~\emph{path} of length $n-1$ with \emph{endvertices} $x_1$ and  $x_n$, also called an~\emph{$(x_1,x_n)$-path}.
We denote the vertex set $\{x_1,x_2,\dots,x_{n}\}$ of such a path $P$ by $V(P)$. 
Note that in the case that $n=1$, $P$ is just a~path of length $0$ consisting of a single vertex.  
We say that $(x,y)$-paths $\{P_i\}^n_{i=1}$ are \emph{internally disjoint} if $V(P_i)\cap V(P_j)=\{x,y\}$ for all $1\leq i<j\leq n$.
Paths $(x_1$,$x_2$,$\dots,x_{i})$ and $(x_{i}$,$x_{i+1}$,$\dots,x_{n})$,$1\le i \le n$, are called a \emph{prefix} and a \emph{suffix} of the path $(x_1$,$x_2$,$\dots,x_{n})$.

Given sets $X,Y\subseteq V(G)$, a path whose endvertices belong to $X$ and $Y$, respectively, is called an \emph{$(X,Y)$-path}.
If $X=\{x\}$ or $Y=\{y\}$, we simply speak about an $(x,Y)$-path or $(X,y)$-path. 
Given $x\in V(G)$ and $Y\subseteq V(G)$, an \emph{$(x,Y)$-fan}  (or \emph{$(Y,x)$-fan})  is a set of $|Y|$ $(x,Y)$-paths (or $(Y,x)$-paths) any two of which have exactly the vertex $x$ in common.

\subsection{Connectivity}

Recall that the {\em connectivity} $\kappa(G)$ of a graph $G$ is the least number of vertices  of $G$ whose removal leaves the resulting graph disconnected or trivial  (i.e., consisting of a single vertex). $G$ is called \emph{$k$-connected} if $\kappa(G)\ge k$. Menger's Theorem states that a~nontrivial graph $G$ is $k$-connected iff for any two distinct vertices $x,y\in V(G)$ there are at least $k$ internally disjoint $(x,y)$-paths.

The following simple property of $k$-connected graphs may be found e.g. in \cite[Lemma~9.3]{B}. 
\begin{proposition}
\label{lem:k-connected}
If $G$ is a $k$-connected graph, then the graph obtained from $G$ by adding a new vertex and joining it to at least $k$ vertices of $G$ is also $k$-connected. 
\end{proposition}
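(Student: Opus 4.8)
The plan is to give a short, direct argument. Let $H$ be the graph obtained from $G$ by adding a new vertex $v$ and joining it to a set $W\subseteq V(G)$ with $|W|\ge k$; we must show $\kappa(H)\ge k$. By Menger's Theorem it suffices to produce $k$ internally disjoint $(x,y)$-paths in $H$ for every pair of distinct vertices $x,y\in V(H)$. I would split into two cases according to whether $v$ is one of the two chosen vertices.

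First suppose $x,y\in V(G)$. Since $G$ is $k$-connected, $G$ already contains $k$ internally disjoint $(x,y)$-paths, and these lie in $H$ as well, so we are done in this case without ever using $v$. (One should also note $|V(G)|\ge k+1$, which holds because a $k$-connected graph is nontrivial and has at least $k+1$ vertices; hence $|V(H)|\ge k+2$ and the statement is not vacuous.)

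Now suppose one of the two vertices is the new vertex, say $x=v$ and $y\in V(G)$. Here the idea is to route $k$ internally disjoint paths from $v$ to $y$, each starting with an edge from $v$ into $W$. If $y\in W$, one such path is the single edge $vy$; for the remaining $k-1$ paths I need an $(\,y,W\setminus\{y\}\,)$-fan of size $k-1$ inside $G$, avoiding $y$ as an internal vertex — this is exactly the fan form of Menger's Theorem applied in the $k$-connected graph $G$ (the version stating that between a vertex and a set of $k$ vertices there exist $k$ paths pairwise meeting only at that vertex). Prepending the edge $v$–$w$ to the $w$-end of each such path, together with the edge $vy$, yields $k$ internally disjoint $(v,y)$-paths in $H$. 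If $y\notin W$, pick any $W'\subseteq W$ with $|W'|=k$; a $(y,W')$-fan of size $k$ in $G$, extended by the edges from $v$, gives the required $k$ internally disjoint $(v,y)$-paths.

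The only subtlety, and the one place to be a little careful, is the invocation of the fan version of Menger's Theorem: the standard fan lemma gives, for a $k$-connected graph, a vertex $u$ and a set $T$ of size $k$ with $u\notin T$, an $(u,T)$-fan of size $k$; when $u$ happens to lie in $W$ one applies it with $u=y$ and $T=W\setminus\{y\}$ of size $\ge k-1$, truncating to exactly $k-1$ if $|W|>k$. Since the excerpt only states the path form of Menger's Theorem explicitly, I would either cite the fan lemma from the same reference \cite{B} or remark that it follows from the path form by adding a dummy vertex adjacent to $T$. With that in hand, the case analysis above is complete and $\kappa(H)\ge k$.
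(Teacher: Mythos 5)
The paper never proves this proposition: it is quoted from \cite[Lemma~9.3]{B}, and its role here is precisely to be the stepping stone from Menger's Theorem to the fan-type Lemma~\ref{lem:fan}. Your argument runs that dependency backwards: you reduce the proposition to the Fan Lemma. The case analysis itself is fine --- for $x,y\in V(G)$ the $k$ paths of $G$ suffice, and for $x=v$ the fan paths prepended with edges at $v$ (plus the edge $vy$ when $y\in W$) are indeed internally disjoint, since the fan paths live in $G$ and pairwise meet only in $y$. So the proof is correct \emph{provided} the Fan Lemma is available as an independently established black box. The weak point is exactly the place you flag: justifying the Fan Lemma ``by adding a dummy vertex adjacent to $T$'' requires knowing that the augmented graph is still $k$-connected, which is the very proposition you are proving (and is exactly how this paper derives Lemma~\ref{lem:fan} from Proposition~\ref{lem:k-connected}); likewise, in \cite{B} the Fan Lemma is itself deduced from Lemma~9.3. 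As written, your justification is therefore circular unless you appeal to a proof of the Fan Lemma that bypasses the expansion step (e.g.\ one based on the vertex-set version of Menger's Theorem).

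The clean fix is a direct cut argument, which is also shorter than the Menger route. Let $H=G+v$ with $N(v)=W$, $|W|\ge k$, and let $S\subseteq V(H)$ with $|S|\le k-1$; show $H-S$ is connected and has at least two vertices. If $v\in S$, then $H-S=G-(S\setminus\{v\})$ and $|S\setminus\{v\}|\le k-2<\kappa(G)$, so this graph is connected and nontrivial (recall $|V(G)|\ge k+1$). If $v\notin S$, then $G-S$ is connected and nontrivial because $|S|<\kappa(G)$, and since $v$ has at least $k>|S|$ neighbors in $G$, some neighbor of $v$ survives in $G-S$, so $H-S$ is connected with at least three vertices. Hence no set of fewer than $k$ vertices disconnects $H$ or reduces it to a single vertex, i.e.\ $\kappa(H)\ge k$, with no appeal to the Fan Lemma at all.
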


The next lemma recalls two useful consequences of Menger's Theorem (\cite[Propositions~9.4-9.5]{B}, the latter known as the Fan Lemma). As we need a slightly extended version for graphs with faulty vertices, both statements are provided with proofs.

\begin{lemma}
\label{lem:fan}
Let $f\ge0$ and $k\ge1$ be integers, $G$ be an $(f+k)$-connected graph, $F \subseteq V(G)$  with $|F|=f$ and $Y\subseteq V(G-F)$ with $|Y|=k$. Then
\begin{enumerate}[\upshape(i)]
	\item\label{lem:fan:xypaths} 
for any $X\subseteq V(G-F)$ such that $|X|=k$, there is a family of $k$ pairwise disjoint $(X,Y)$-paths in $G-F$;
	\item\label{lem:fan:fan} 
for any $x\in V(G-F)$ there is an $(x,Y)$-fan in $G-F$.
\end{enumerate}
 \end{lemma}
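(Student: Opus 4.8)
The plan is to derive both parts of Lemma~\ref{lem:fan} from Menger's Theorem applied to the graph $G-F$, using the fact that deleting $f$ vertices from an $(f+k)$-connected graph leaves a graph that is still $k$-connected. For part~\ref{lem:fan:xypaths}, first I would observe that $G-F$ is $k$-connected (this is immediate from the definition of connectivity: any vertex cut of $G-F$ together with $F$ is a vertex cut of $G$, so it has size at least $f+k$, hence the cut itself has size at least $k$; the only subtlety is the ``trivial graph'' clause, which is not an issue as long as $|V(G-F)|\ge k+1$, and if $|V(G-F)|=k$ the statement that $X$ and $Y$ are each all of $V(G-F)$ makes the disjoint-paths claim trivial). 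Then I would build an auxiliary graph $G'$ from $G-F$ by adding two new vertices $x^\*$ and $y^\*$, joining $x^\*$ to every vertex of $X$ and $y^\*$ to every vertex of $Y$. By Proposition~\ref{lem:k-connected} applied twice, $G'$ is still $k$-connected (each added vertex has $k$ neighbors in the already-$k$-connected graph). By Menger's Theorem there are $k$ internally disjoint $(x^\*,y^\*)$-paths in $G'$; deleting the endpoints $x^\*,y^\*$ from each yields $k$ pairwise disjoint paths in $G-F$, each starting in $X$ and ending in $Y$, which is exactly the desired family. (A minor case to handle: if a vertex lies in $X\cap Y$ it may occur as a length-$0$ path; this is consistent with the conventions on paths already fixed in the Preliminaries.)

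For part~\ref{lem:fan:fan}, the structure is the same but simpler. If $x\in Y$, the fan can be taken to consist of the trivial path $(x)$ together with a fan to the remaining $k-1$ vertices, so I may assume $x\notin Y$; alternatively one runs the construction uniformly. I would again work in $G-F$, which is $k$-connected, add a single new vertex $y^\*$ adjacent to all $k$ vertices of $Y$, obtaining a $k$-connected graph $G''$ by Proposition~\ref{lem:k-connected}, and then apply Menger's Theorem to get $k$ internally disjoint $(x,y^\*)$-paths in $G''$. Removing $y^\*$ from each path leaves $k$ paths from $x$ that are pairwise disjoint except at $x$ and whose other endpoints are distinct vertices of $Y$; since there are $k$ such paths and $|Y|=k$, the endpoints exhaust $Y$, giving the required $(x,Y)$-fan.

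I do not expect any serious obstacle here — this is essentially the textbook derivation of the Fan Lemma, with the single twist of first passing to $G-F$. The one point that needs a careful sentence is the claim that $G-F$ is $k$-connected, together with the boundary behavior when $|V(G-F)|$ is as small as $k$ or $k+1$ (so that $G-F$ might be complete or might coincide with $X=Y$); in those degenerate cases the asserted paths or fan either exist trivially or the statement is vacuous, and I would dispatch them in a line before running the main argument.
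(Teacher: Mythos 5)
Your proof is correct, and at heart it is the same Menger-plus-auxiliary-vertex argument, but you route around the faulty set $F$ differently from the paper. You first delete $F$ and observe that $G-F$ is still $k$-connected (any set whose removal disconnects or trivializes $G-F$ would, together with $F$, do the same to $G$), and then run the textbook fan construction entirely inside $G-F$: one auxiliary vertex attached to $Y$ for part (ii), and two auxiliary vertices attached to $X$ and $Y$ for part (i). The paper never passes to $G-F$: it keeps $F$ in the graph, joins the single auxiliary vertex to all of $F\cup Y$, applies Menger in the resulting $(f+k)$-connected graph to get $f+k$ internally disjoint paths, and then simply discards those ending in $F$; since $F\cap Y=\emptyset$, every vertex of $F\cup Y$ is the endpoint of exactly one of these paths, so internal disjointness forces the surviving $(x,Y)$-paths to avoid $F$ automatically. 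The paper also derives part (i) from part (ii) by adding one vertex over $X$, rather than proving (i) directly. The trade-off: your version relies on the (standard, but unstated in the paper) fact that deleting $f$ vertices from an $(f+k)$-connected graph leaves a $k$-connected graph, and is somewhat more modular; the paper's version needs only Proposition~\ref{lem:k-connected} and Menger, getting the avoidance of $F$ for free from the endpoint count. Finally, your caution about the degenerate cases $|V(G-F)|\in\{k,k+1\}$ is unnecessary: an $(f+k)$-connected graph has at least $f+k+1$ vertices, so $|V(G-F)|\ge k+1$ always, and nothing in your argument breaks when equality holds.
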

\begin{proof}
To prove part \eqref{lem:fan:fan}, obtain a new graph $H$ from $G$ by adding a new vertex $y$ and joining it to each vertex of $F\cup Y$. By Proposition~\ref{lem:k-connected}, $H$ is also $(f+k)$-connected. Therefore, by Menger's Theorem, there are $f+k$ internally disjoint $(x,y)$-paths in $H$. Deleting $y$ from each of these paths, we obtain $(f+k)$ internally disjoint $(x,F\cup Y)$-paths. It remains to consider only the $(x,Y)$-paths, observe that no vertex in any of them falls into $F$, and conclude that they form the desired $(x,Y)$-fan in $G-F$.

Part \eqref{lem:fan:xypaths} is now an easy corollary of \eqref{lem:fan:fan}: Indeed, let $H$ be a new graph obtained from $G$ by joining a new vertex $x$ to each vertex of $X$. Then $H$ is $(f+k)$-connected by Proposition~\ref{lem:k-connected} and therefore, by part \eqref{lem:fan:fan}, there is an $(x,Y)$-fan in $H-F$ formed by $k$ internally disjoint $(x,Y)$-paths. Deleting $x$ from each of these paths, we obtain the desired family of $k$ pairwise disjoint $(X,Y)$-paths in $G-F$. 
\end{proof}

Given a graph $G$ and a set $U\subseteq V(G)$ of \emph{faults}, the \emph{survival subgraph} of $G$ for $U$, denoted by $G\ominus U$, is the subgraph of $G$ induced by the set $V(G)\setminus N[U]$. 
In this context, we refer to the vertices of $N[U]$ as \emph{faulty} while the vertices of $G\ominus U$ are called \emph{healthy}. If $U=\{u\}$ we simply write $G\ominus u$.

\emph{Neighbor connectivity} of a graph $G$, denoted by $\kappa_{NB}(G)$, is defined as the minimum cardinality of a set $U\subseteq V(G)$ such that the survival subgraph $G\ominus U$ is disconnected, complete or empty.
The exact values of the neighbor connectivity for the complete graphs $K_n$ and cycles $C_n$ are straightforward from the definition, namely $\kappa_{NB}(K_{n})=0$ for $n\geq 1$ while
\[
\kappa_{NB}(C_{n})=
\begin{cases}
0&\quad \text{ for $n=3$}\\
1&\quad \text{ for $4\leq n\leq 5$}\\
2&\quad \text{ for $n\geq 6$}\,.	
\end{cases}
\]

\subsection{$K$-ary $n$-cubes}

The \emph{Cayley graph}  of a finite group $(\Gamma, \cdot)$ with respect to its generating set $S$, denoted by $Cay(\Gamma,S)$, is defined as the graph with vertex set $\Gamma$  and edge set $\{\{g, g\cdot s\}\,|\,g\in \Gamma, s\in S\}$. Following \cite{D1}, we assume that for each generator $s\in S$, its inverse $s^{-1}$ also falls into $S$, and that the identity element $e$ is not included in $S$.  
If the group $\Gamma$ is abelian, $Cay(\Gamma, S)$ is referred to as an \emph{abelian Cayley graph}.
 
Given integers $k\geq2$ and $n\geq1$, the \emph{$k$-ary $n$-cube} $Q_{n}^{k}$ may be defined as the Cayley graph $Cay(\mathbb{Z}_k^n, \{e_i, e_i^{-1}\}_{i=1}^n)$ on the group $(\mathbb{Z}_k^n, \oplus)$ where $\mathbb{Z}_k^n$ stands for the $n$-th Cartesian power of $\mathbb{Z}_k=[k]$ while $\oplus$ represents coordinatewise addition $\bmod\ k$. The generators $e_i$ and $e_i^{-1}$ have one and $k-1$, respectively, in the $i$-th coordinate and zeroes elsewhere.
From a less abstract point of view, $Q_{n}^{k}$ is just the graph whose vertices are strings of the form 
$u_{n-1}u_{n-2}\cdots u_{0}$ where $u_{i}\in [k]$ for all $i\in [n]$. 
Two vertices $u_{n-1}u_{n-2}\cdots u_{0}$ and $v_{n-1}v_{n-2}\cdots v_{0}$ are adjacent whenever there is a~$j\in[n]$ such that $|u_{j}-v_{j}|\in\{1,k-1\}$ while $u_{i}=v_{i}$ for every $i\in [n]\setminus\{j\}$.

Note that $Q_n^k$ is a $\delta$-regular graph where $\delta=n$ for $k=2$ while $\delta=2n$ for $k\ge3$. Moreover, as $Q_n^k$ is a connected edge-transitive graph, we have $\kappa(Q_n^k)=\delta(Q_n^k)$ \cite[Problem 12.15]{Lovasz}. Consequently, $Q_{n}^{k}$ is $n$-connected for $k=2$ and $2n$-connected for $k\geq 3$ \cite{D97}.

Regarding the special cases $n=1$, $k=2$ and $n=2$: $Q_{1}^{k}$ is a cycle of length $k$, $Q_{n}^{2}$ is an {\it $n$-dimensional hypercube} and $Q_{2}^{k}$ is a $k\times k$ \emph{wrap-around mesh}. The graphs $Q_1^3$, $Q_2^3$ and $Q_3^3$ are shown in Fig.~\ref{fig1}.

\begin{figure}
\begin{center}
\includegraphics[width=5.5in]{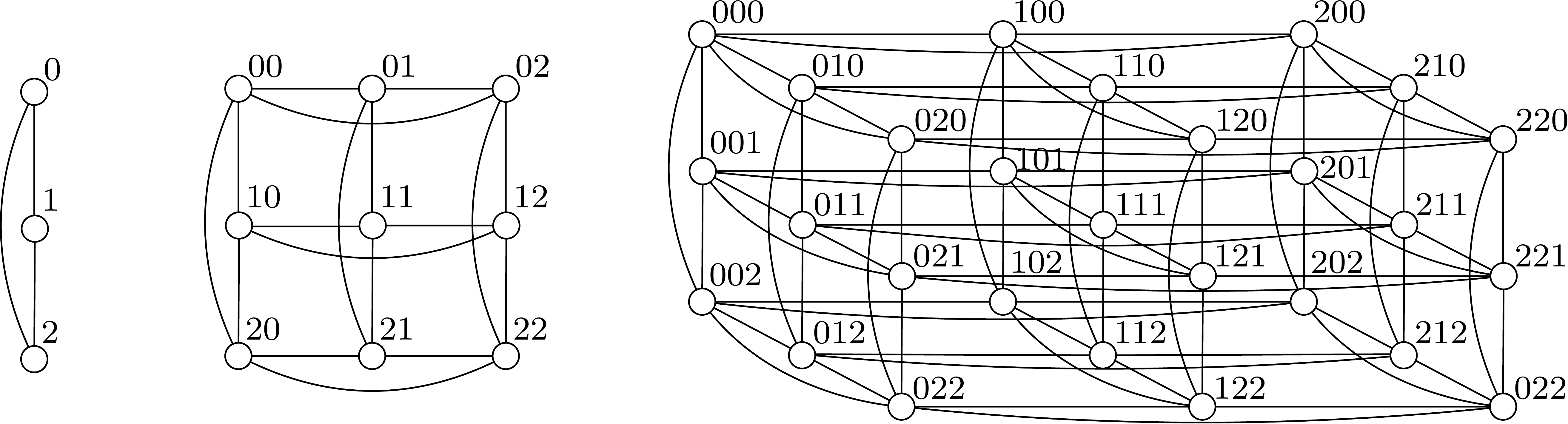} 
\end{center}
\caption {Illustration of $3$-ary $n$-cubes $Q_{1}^{3}$, $Q_{2}^{3}$ and $Q_{3}^{3}$}
\label{fig1}
\end{figure}

Now fix a $d\in[n]$ and note that for any $i\in[k]$, the subgraph of $Q_n^k$ induced by the set $\{u_{n-1}u_{n-2}\cdots u_{0}\in V(Q_n^k)\mid u_d=i\}$ is isomorphic to $Q_{n-1}^k$. We call this subgraph a \emph{subcube} of $Q_n^k$ and denote it by $Q[i]$. 
Given a vertex $u=u_{n-1}u_{n-2}\cdots u_{0}$ of $Q[i]$ and $j\in[k]$, we use $u^j$ to denote the vertex obtained from $u$ by changing the value of $u_d$ to $j$, i.e., the vertex $u_{n-1}\cdots u_{d+1}\, j\, u_{d-1} \dots  u_{0}$.
If $i,j\in[k]$ such that $|i-j|\in\{1,k-1\}$, $Q[i]$ and $Q[j]$ are called \emph{adjacent subcubes}. Each vertex $u$ of $Q[i]$ has a unique neighbor $u^j$ in $Q[j]$, called an \emph{outer neighbor} of $u$. 

Note that any two vertices of a hypercube have either two neighbors in common or none at all. This condition, known as the \emph{$(0,2)$-property}, can be even used to characterize hypercubes \cite[Section~2.2]{Mulder}. The following lemma \cite{G14,H12} explains that $k$-ary $n$-cubes inherit this property only for $k\in\{2,4\}$ (simply because $Q_n^2=Q_n$ and $Q_n^4\cong Q_{2n}$) while in the other cases there are also vertices with only one common neighbor. 
\begin{lemma}\label{lem:(0,2)} 
For any two distinct vertices $x,y\in V(Q_{n}^{k})$ we have
\[
 |N(x)\cap N(y)|\in\left\{\begin{array}{ll}
 \{0,2\}& \ \text{if $k\in\{2,4\}$}\\
 \{0,1,2\}&\ \text{otherwise}\, .
  \end{array}\right.
 \]
Moreover, if $k=3$, then  $|N(x)\cap N(y)|=1$ iff $x$ and $y$ are adjacent.
\end{lemma}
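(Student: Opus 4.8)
The statement to prove is Lemma~\ref{lem:(0,2)}, concerning $|N(x)\cap N(y)|$ for distinct vertices $x,y$ of $Q_n^k$.

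\medskip

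The plan is to work directly with the coordinate description of $Q_n^k$, since adjacency is entirely local: $u$ and $v$ are adjacent iff they differ in exactly one coordinate, say coordinate $j$, and there the difference is $\pm 1 \bmod k$. First I would observe that a common neighbor $z$ of $x$ and $y$ must, by definition, agree with $x$ in all but one coordinate and with $y$ in all but one coordinate; hence $x$ and $y$ themselves can differ in at most two coordinates, and if they differ in zero coordinates they are equal. So there are only two cases to analyze: $x$ and $y$ differ in exactly one coordinate, or in exactly two coordinates. In the two-coordinate case, say coordinates $i$ and $j$, any common neighbor must agree with $x$ outside $\{i,j\}$, and must be obtained from $x$ by a single edge and from $y$ by a single edge; one checks that $z$ must match $y$ in coordinate $i$ and $x$ in coordinate $j$, or vice versa. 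This forces $|x_i - y_i| \in \{1,k-1\}$ and $|x_j - y_j|\in\{1,k-1\}$ as a necessary condition, and when it holds there are exactly two such $z$. Hence in the two-coordinate case $|N(x)\cap N(y)|\in\{0,2\}$ always, regardless of $k$.

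\medskip

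The more delicate case is when $x$ and $y$ differ in exactly one coordinate, say coordinate $j$, with $x_j \ne y_j$. A common neighbor $z$ then agrees with both $x$ and $y$ outside coordinate $j$ (since changing any other coordinate of $x$ would leave coordinate $j$ disagreeing with $y$, and vice versa — unless $k$ is such that a single step can be interpreted two ways, which I would rule out carefully). So $z = x^{(t)}$ for some value $t$ in coordinate $j$ with $|t-x_j|\in\{1,k-1\}$ and $|t-y_j|\in\{1,k-1\}$; this reduces the whole question to the cycle $Q_1^k = C_k$: namely $z_j$ must be a common neighbor of $x_j$ and $y_j$ on the $k$-cycle $\mathbb{Z}_k$. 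Now I would enumerate: two distinct vertices $a,b$ on $C_k$ have $0$, $1$, or $2$ common neighbors. If $a,b$ are at cyclic distance $2$, they have exactly one common neighbor (the vertex strictly between them), unless $k=4$, in which case distance $2$ means antipodal and they share \emph{two} common neighbors; if $k=3$, two distinct vertices are always adjacent (distance $1$) and share exactly one common neighbor (the third vertex); if $k\ge 5$, distance $2$ gives one common neighbor and distance $1$ gives none (on $C_k$ with $k\ge 4$, adjacent vertices have no common neighbor since there are no triangles, and for $k=4$ adjacent vertices also have none). For $k=2$ there is no such case since $Q_1^2=K_2$ has $x_j\ne y_j$ adjacent with no common neighbor inside the cycle, matching the $(0,2)$-property globally.

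\medskip

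Assembling the two cases: for $k\in\{2,4\}$ the one-coordinate case never produces exactly one common neighbor (for $k=2$ it produces $0$; for $k=4$, distance-$2$/antipodal produces $2$, distance-$1$ produces $0$), and the two-coordinate case produces $0$ or $2$, so $|N(x)\cap N(y)|\in\{0,2\}$. For $k=3$ or $k\ge5$, the one-coordinate case can produce exactly $1$, so the full range is $\{0,1,2\}$. Finally, for the ``moreover'' part with $k=3$: there are only three values in each coordinate, so $x$ and $y$ can differ in at most $\dots$ well, if they differ in two coordinates $i,j$ then in $\mathbb{Z}_3$ any two distinct values are at distance $1$, so $|x_i-y_i|,|x_j-y_j|\in\{1,2\}$ automatically, giving exactly $2$ common neighbors; and if they differ in one coordinate, the distance-$1$ case on $C_3$ gives exactly $1$ common neighbor. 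So $|N(x)\cap N(y)|=1$ forces $x,y$ to differ in exactly one coordinate, which for $k=3$ means they are adjacent; conversely adjacent vertices (necessarily differing in one coordinate by $\pm1$ in $\mathbb{Z}_3$, i.e.\ distance $1$ on $C_3$) have exactly one common neighbor. The main obstacle is the careful bookkeeping in the one-coordinate case to be sure no ``accidental'' common neighbor arises by changing a second coordinate — this needs the observation that such a $z$ would then disagree with one of $x,y$ in coordinate $j$, and I should double-check the corner case $k=2$ where ``$\pm1$'' and ``$k-1$'' coincide, but there it is absorbed into the two-coordinate analysis and gives no trouble.
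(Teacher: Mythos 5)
Your argument is correct and complete in substance. Note that the paper itself does not prove Lemma~\ref{lem:(0,2)} at all: it quotes it from \cite{G14,H12} and only remarks that the $k\in\{2,4\}$ case follows from $Q_n^2=Q_n$ and $Q_n^4\cong Q_{2n}$ together with the classical $(0,2)$-property of hypercubes. Your direct coordinatewise case analysis is therefore a genuinely self-contained alternative, and it is sound: a common neighbor forces $x$ and $y$ to differ in at most two coordinates; in the two-coordinate case the two candidate common neighbors (swap one differing coordinate or the other) give count $0$ or $2$ for every $k$; in the one-coordinate case the count reduces to the number of common neighbors of $x_j,y_j$ in $C_k$ (or $K_2$), which is never $1$ exactly when $k\in\{2,4\}$ (triangle-freeness for distance $1$, the antipodal pair for $k=4$ at distance $2$), and can be $1$ otherwise. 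The point you flagged as needing care — ruling out a common neighbor obtained by changing a coordinate other than $j$ — is indeed fine and needs no special treatment of $k=2$: such a $z$ would disagree with $y$ in two coordinates, contradicting adjacency. The $k=3$ equivalence also comes out correctly, since in $\mathbb{Z}_3$ any two distinct entries are at distance $1$ or $2=k-1$, so a one-coordinate difference is automatically an edge (giving count $1$), a two-coordinate difference automatically yields both swaps (count $2$), and more differences give count $0$. What the paper's route buys is brevity by delegation; what yours buys is a complete elementary proof that also makes explicit why exactly one common neighbor can occur for every $k\notin\{2,4\}$, not just $k=3$.
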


The last lemma of this section applies a counting argument to impose a lower bound on the number of healthy pairs of neighbors in adjacent subcubes. Although rather technical, this result provides a useful tool to be used later in our constructions of internally disjoint paths.
\begin{lemma}
\label{lem:counting}
Let $n,k\ge3$, $U$ be a set of $\ell\in[n]$ faults in $Q_n^k$, $Q[i]$ and $Q[j]$ be adjacent subcubes of $Q_n^k$ and $x$ be a healthy vertex of $Q[i]$. Put $u_i=|U\cap V(Q[i])|$, $u_j=|U\cap V(Q[j])|$ and $h = 2n-2-\ell-u_i-u_j$. Then we have
\[
    |\{v \in V(Q[i]) \mid \text{ both $v$ and $v^j$ are healthy}\}| > h
\]
and, moreover, at least $h$ of these vertices are neighbors of $x$. 
\end{lemma}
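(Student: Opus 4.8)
The plan is to fix notation for the local structure around $x$ and then count, fault by fault, how much damage each element of $U$ can do. Let $x^j\in V(Q[j])$ be the outer neighbor of $x$, and let $v_1,\dots,v_{2n-2}$ be the neighbors of $x$ inside $Q[i]$; since $Q[i]\cong Q_{n-1}^k$ is $2(n-1)$-regular there are exactly $2n-2$ of them, and their outer neighbors $v_1^j,\dots,v_{2n-2}^j$ are precisely the neighbors of $x^j$ in $Q[j]$. Call the pair $(v_m,v_m^j)$ \emph{good} if both $v_m$ and $v_m^j$ are healthy. Since $x$ is healthy, no fault lies in $N[x]$, so in particular none of $v_1,\dots,v_{2n-2},x^j$ belongs to $U$. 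I would first prove the ``moreover'' clause, that at least $h$ of the pairs $(v_m,v_m^j)$ are good, and then bootstrap it to the bound over all of $V(Q[i])$.

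For the ``moreover'' clause, assign to each $u\in U$ the number $b(u)$ of indices $m$ for which $v_m\in N[u]$ or $v_m^j\in N[u]$; then at most $\sum_{u\in U}b(u)$ of the pairs fail to be good. I would bound $b(u)$ by a short case analysis on where $u$ sits, using Lemma~\ref{lem:(0,2)}. (i) If $u\in V(Q[i])$, then $u\notin N[x]$, so any index $m$ counted by $b(u)$ has $v_m$ a common neighbor of $u$ and $x$ (the $v_m^j$ cannot lie in $N[u]$: the only neighbor of $u$ in $Q[j]$ is $u^j$, which is none of the $v_m^j$ because $u\ne v_m$), and there are at most two such $v_m$; hence $b(u)\le2$. (ii) If $u\in V(Q[j])$, then $u\ne x^j$, and $v_m^j\in N[u]$ forces $v_m^j$ to equal $u$ or to be a common neighbor of $u$ and $x^j$ (while $v_m\in N[u]$ already forces $u=v_m^j$, adding nothing); the number of such $v_m^j$ is at most two, where one crucially uses that if $u$ is adjacent to $x^j$ then $u$ and $x^j$ have at most one common neighbor; hence $b(u)\le2$. (iii) If $u$ lies in a subcube adjacent to $Q[i]$ or to $Q[j]$ but $u\notin V(Q[i])\cup V(Q[j])$, then $u$ reaches $\{v_1,\dots,v_{2n-2},v_1^j,\dots,v_{2n-2}^j\}$ only along outer edges, which a short check shows accounts for at most one index $m$, so $b(u)\le1$. (iv) Otherwise $b(u)=0$. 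Since at most $\ell-u_i-u_j$ faults lie outside $V(Q[i])\cup V(Q[j])$, summing gives $\sum_{u\in U}b(u)\le2(u_i+u_j)+(\ell-u_i-u_j)=\ell+u_i+u_j$, so at least $(2n-2)-(\ell+u_i+u_j)=h$ of the pairs $(v_m,v_m^j)$ are good --- these are $h$ good vertices among the neighbors of $x$.

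For the strict inequality over all of $V(Q[i])$, I would pass to a global count on the perfect matching $v\mapsto v^j$ between $V(Q[i])$ and $V(Q[j])$. A vertex $v\in V(Q[i])$ fails to give a good pair exactly when $v\in N[U]$ or $v^j\in N[U]$, so the number of such $v$ is at most $|N[U]\cap V(Q[i])|+|N[U]\cap V(Q[j])|\le\sum_{u\in U}|N[u]\cap(V(Q[i])\cup V(Q[j]))|$. A fault lying in $V(Q[i])\cup V(Q[j])$ contributes at most $2n$ to this sum ($2n-1$ inside its own subcube, plus its outer neighbor in the other), a fault in a subcube adjacent to $Q[i]$ or $Q[j]$ but lying in neither contributes at most $2$, and every other fault contributes $0$. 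This yields at most $(2n-2)(u_i+u_j)+2\ell$ bad vertices, hence at least $k^{n-1}-(2n-2)(u_i+u_j)-2\ell$ good pairs, and a short computation using $u_i+u_j\le\ell\le n-1$ shows this exceeds $h=2n-2-\ell-u_i-u_j$ whenever $k^{n-1}>2n(n-1)$. The latter holds for all $n,k\ge3$ except $(n,k)=(3,3)$; for the remaining graph $Q_3^3$, which carries at most two faults, the claim is verified directly, the key point being that the union of two closed neighborhoods cannot cover the $3\times3$ wrap-around mesh $Q_2^3$.

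The step I expect to be the real obstacle is the case analysis bounding $b(u)$, and within it part (ii): a fault of $Q[j]$ adjacent to $x^j$ looks as though it could spoil three pairs --- itself, plus two common neighbors of it and $x^j$ --- and bringing this back to $2$ requires knowing that adjacent vertices of $Q_n^k$ share at most one neighbor. The adjacency bookkeeping is delicate because the coordinate-$d$ cycle behaves very differently for $k=3$ (a triangle, so the three subcubes $Q[i],Q[j],Q[t]$ are pairwise adjacent and $Q_n^3$ contains triangles) than for $k\ge4$, and one must check that the case split is exhaustive in both regimes. By comparison, the strict inequality is routine apart from the lone small case $Q_3^3$.
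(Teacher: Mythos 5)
Your proposal is correct and follows essentially the same route as the paper: a per-fault damage count (at most two spoiled pairs for each fault in $Q[i]\cup Q[j]$, at most one for each outside fault) gives the $h$ good neighbors of $x$, and the strict inequality is obtained by comparing the number of spoiled vertices of $Q[i]$ with $k^{n-1}$, with the exceptional case $(n,k)=(3,3)$ settled via the fact that two closed neighborhoods cannot cover $Q_2^3$. The only difference is that your global bound is slightly cruder (the paper charges $2n-1$ to each fault inside $Q[i]\cup Q[j]$ and $1$ to each outside fault, which already disposes of $Q_3^3$ except when both faults lie in $Q[i]\cup Q[j]$), so your ``direct verification'' of $Q_3^3$ must also cover the subcase of one inside and one outside fault --- which the sharper count handles at once.
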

\begin{proof}
Put $U_i=U\cap V(Q[i])$, $U_j=U\cap V(Q[j])$ and $H=\{v \in V(Q[i]) \mid v,v^j\not\in N[U]\}$. First we show that $H$ contains at least $h$ neighbors of $x$. To that end, observe that the vertex $x$ has $2n-2$ neighbors in $Q[i]$ and, by Lemma~\ref{lem:(0,2)}, at most $2u_i$ of them may fall into $N[U_i]$. Similarly, at most $2u_j$ neighbors of $x^j$ in $Q[j]$ may fall into $N[U_j]$. Moreover, if a vertex of $Q[i]$ falls into $N[U_{j}]$, then its outer neighbor in $Q[j]$ falls into $U_{j}$, and an analogous statement holds for the vertices of $Q[j]$. Finally, the number of vertices $v$ of  $Q[i]$ such that $v$ or $v^j$ falls into $N(U\setminus (U_i\cup U_j))$ does not exceed $\ell-u_i-u_j$.  It follows that there are at least  
\[
2n-2-2u_i-2u_j-(\ell-u_i-u_j)=h
\] 
healthy neighbors of $x$ in $Q[i]$ such that their outer neighbors  in $Q[j]$ are healthy as well, i.e., $H$ contains at least $h$ neighbors of $x$.

It remains to prove that $|H|>h$. Put $F=V(Q[i])\setminus H$ and note that as $|F|+|H|= |V(Q[i])|$, it suffices to show that $|F|+h<|V(Q[i])|$. Recall that $Q[i], Q[j]$ are regular graphs of degree $2n-2$ and therefore $|F| \le \ell + (u_i+u_j)(2n-2)$.  It follows that
\[
|F| +h\le (u_i+u_j+1)(2n-3) + 1\le 
\begin{cases} n(2n-3)+1 &<  k^{n-1} = |V(Q[i])|  \ \text{for } k\ge3, n\ge4 \\
\hfill 7 &< 9\phantom{{}^{n-1}} = |V(Q[i])| \  \text{for } k=n=3, u_i+u_j\le1 ,
\end{cases}
\]
using $u_i+u_j\le\ell\le n-1$ to obtain the second inequality. 
Regarding the case that $k=n=3$ and $u_i+u_j=2$, note that then $Q[i]$ is isomorphic to $Q_2^3$, see Fig.~\ref{fig1}. But then Lemma~\ref{lem:(0,2)} implies that $|N[x]\cap N[y]|\ge2$ for any two distinct vertices $x,y\in F$, which means that actually  $|F|\le \ell + (u_i+u_j)(2n-2)-2$.
Hence  in this case we have 
\[
|F|+h \le (u_i+u_j+1)(2n-3) -1 = 8 < 9 = |V(Q[i])| .
\]
It follows that $|H|>h$ for all $n,k\ge3$ and the proof is complete.
\end{proof}

\section{Hypercubes}
\label{sec:hypercubes}
In order to determine $\kappa_{NB}(Q^k_{n})$, we need to find the smallest set $U\subseteq V(Q_n)$ such that $Q^k_n\ominus U$ becomes disconnected. To that end, we derive a more general statement, providing lower bounds to the connectivity of $Q^k_n\ominus U$ for all subsets $U$ whose cardinality does not exceed $\lceil\delta(Q_n^k)/2\rceil$. 
We first deal with the class of hypercubes.   
\begin{theorem}\label{th:hypercube}
Let $n\ge2$ and $0\leq \ell\leq \lceil n/2\rceil$. Then  for every set $U\subseteq V(Q_n)$ of $\ell$ faults we have 
\[
\kappa (Q_n\ominus U)\geq \max\{n-2\ell,0\} .
\]
\end{theorem}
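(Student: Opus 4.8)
The plan is to proceed by induction on $n$, peeling off one coordinate at a time and distributing the faults among the $k=2$ subcubes $Q[0]$ and $Q[1]$, each isomorphic to $Q_{n-1}$. The base cases $n=2,3$ are small enough to check directly (for $n=2$, $\ell\le1$ and $Q_2=C_4$; for $n=3$, $\ell\le2$), so assume $n\ge4$ and that the statement holds for $Q_{n-1}$. Given $U\subseteq V(Q_n)$ with $|U|=\ell\le\lceil n/2\rceil$, write $u_0=|U\cap V(Q[0])|$, $u_1=|U\cap V(Q[1])|$, so $u_0+u_1=\ell$. Each subcube has a fault set of size at most $\ell\le\lceil n/2\rceil\le\lceil(n-1)/2\rceil+1$, so by induction $\kappa(Q[i]\ominus U_i)\ge\max\{(n-1)-2u_i,0\}$ for $i=0,1$ — though I will need to be a little careful when $u_i$ is large enough that the induction gives $0$ but the target $n-2\ell$ is still positive; in that regime $\ell$ is close to $\lceil n/2\rceil$ and $n-2\ell$ is small (at most $1$ or so), which should be handled separately by an ad hoc argument.

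The core of the argument is to show that any two healthy vertices $x,y$ of $Q_n\ominus U$ are joined by many internally disjoint paths. If $x,y$ lie in the same subcube, say $Q[0]$, the induction hypothesis already gives $\max\{(n-1)-2u_0,0\}$ internally disjoint $(x,y)$-paths inside $Q[0]\ominus U_0$; to reach the target $n-2\ell$ I would additionally route paths through $Q[1]$ using the outer neighbors. Here Lemma~\ref{lem:counting} (applied with $k=2$, after checking the degree bookkeeping still goes through for $k=2$, or reproving the tiny variant we need) guarantees that $x$ and $y$ each have enough healthy neighbors in $Q[0]$ whose outer neighbors are healthy in $Q[1]$, and then part \eqref{lem:fan:xypaths} of Lemma~\ref{lem:fan}, applied to $Q[1]\ominus U_1$ with its fault set, lets me connect a prescribed set of such outer neighbors of $x$ to a prescribed set of such outer neighbors of $y$ by pairwise disjoint paths. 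Concatenating gives the extra paths, and a counting check confirms that the total number of internally disjoint paths is at least $n-2\ell$. The case where $x\in Q[0]$ and $y\in Q[1]$ is similar but easier: route some paths directly via outer neighbors and, if more are needed, detour through one subcube using the fan lemma, again relying on Lemma~\ref{lem:counting} to supply enough healthy crossing edges.

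The main obstacle is the bookkeeping in the mixed regime where one subcube absorbs most of the faults: when $u_0$ is large the induction gives essentially nothing inside $Q[0]$, so almost all of the required $n-2\ell$ paths must be built by crossing into $Q[1]$ and back, and I must verify that Lemma~\ref{lem:counting} supplies enough vertex-disjoint crossing edges near both $x$ and $y$ simultaneously, and that these can be matched up disjointly through $Q[1]\ominus U_1$ without colliding with the faults or with each other. A secondary subtlety is that Lemma~\ref{lem:counting} is stated for $k\ge3$, so I would either re-examine its short proof to confirm it (or a slightly weaker sufficient bound) holds for $k=2$ with $\ell\le\lceil n/2\rceil$, or replace it by the elementary fact that in $Q_n$ any vertex has $n-1$ neighbors in $Q[0]$ and the $(0,2)$-property lets at most $2u_i$ of the relevant crossing edges be spoiled by faults in a given subcube. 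Finally, I should not forget the trivial direction: when $n-2\ell\le0$ the claim $\kappa(Q_n\ominus U)\ge0$ is vacuous, so the content is entirely in the range $2\ell<n$.
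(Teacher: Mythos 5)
Your overall route---induction on $n$, splitting $Q_n$ into two subcubes, counting healthy crossing neighbours via the $(0,2)$-property, and finishing with Lemma~\ref{lem:fan} in the other subcube---is the same as the paper's, and your fallback of replacing Lemma~\ref{lem:counting} by a direct $(0,2)$-property count is exactly what the paper does (Lemma~\ref{lem:counting} is never invoked for $k=2$). The structural simplification you miss is that the splitting coordinate can be \emph{chosen} so that $x$ and $y$ lie in different subcubes (they differ in at least one coordinate), so the same-subcube case you plan for never arises; the paper only has the cross case, builds the fan in the subcube containing $y$, and assumes $u_0\ge u_1$ without loss of generality.

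The genuine weak points are in the branch you add and in the bookkeeping you defer. First, when $x,y\in Q[0]$ you say the induction hypothesis ``already gives $\max\{(n-1)-2u_0,0\}$ internally disjoint $(x,y)$-paths inside $Q[0]\ominus U_0$'', but $Q[0]\ominus U_0$ is not a subgraph of $Q_n\ominus U$: up to $u_1$ vertices of $Q[0]$ are healthy with respect to $U_0$ yet are outer neighbours of faults in $Q[1]$, so those paths need not survive in $Q_n\ominus U$. This is repairable (also delete these at most $u_1$ vertices, lowering the connectivity to $(n-1)-2u_0-u_1$, which still meets the target $n-2\ell$ when $u_1\ge1$, and add one extra path through $Q[1]$ when $u_1=0$), but the step as written fails, and the distinction between ``healthy in a subcube'' and ``healthy in $Q_n$'' is precisely what the fault set $F$ in Lemma~\ref{lem:fan} is designed to handle. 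Second, the ``counting check'' you postpone is where the real content lies: in the cross case one must verify that the set $Y$ of healthy outer neighbours in $Q[1]$ (of size $n-2\ell$, obtained by distinguishing whether $x^1$ is healthy or lies in $N[U_1]$), together with the at most $u_0$ outer neighbours of $U_0$ that are faulty inside $Q[1]\ominus U_1$, satisfies $|Y|+u_0\le (n-1)-2u_1$; this inequality holds only because $u_0\ge u_1$ and $\ell\ge1$ force $u_0\ge1$, and without this verification the application of Lemma~\ref{lem:fan} is not justified. So the plan has the right ingredients, but the decisive inequalities and the subcube-versus-global healthiness issue still need to be supplied.
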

\begin{proof}
First note that as 
$Q_2$ is a 4-cycle, $\kappa(C_4)=2$ and $\kappa(K_1)=0$, the theorem holds for $n=2$. Hence we can assume that $n\ge3$.

We argue by induction on $n$.  Note that it suffices to settle the case that $0<\ell<n/2$, for otherwise the statement of the theorem is either trivial (if $\ell\ge n/2$) or follows from $\kappa(Q_n)=n$ (if $\ell=0$). Assume that $n\ge3$ and that the statement holds for every set $U\subseteq V(Q_{n-1})$, $0\le|U|\le \lceil (n-1)/2\rceil$.

Let $U\subseteq V(Q_n)$ be a set of $\ell$ faults, $0<\ell<n/2$. To prove that $\kappa (Q_n\ominus U)\geq n-2\ell$, by Menger's Theorem it suffices to show that for any distinct $x,y\in V(Q_n\ominus U)$ there are at least $n-2\ell$ internally disjoint $(x,y)$-paths in $Q_n\ominus U$.
Note that $Q_{n}$ can be partitioned into two subcubes $Q[0]$ and $Q[1]$ such that $x$ and $y$ are in distinct subcubes. Put $U_i=U\cap V(Q[i])$ and $u_i=|U_i|$ for both $i\in[2]$. Without loss of generality assume that $x\in V(Q[0])$, $y\in V(Q[1])$ and $u_0\ge u_1$.

Recall that as $x\in V(Q_n\ominus U)$, no neighbor of $x$ belongs to $U$, but some of them may fall into $N(U)$.
Observe that $x$ has $n-1$ neighbors in $Q[0]$ and, by the $(0,2)$-property (Lemma~\ref{lem:(0,2)}), at most $2u_0$ of them may belong to $N(U_0)$.  Hence there are neighbors $x_1,x_2,\ldots,x_{n-1-2u_0}$ of $x$ in $Q[0]\ominus U_0$. Consider the outer neighbors $x^1,x_1^1,x_2^1,\ldots,x_{n-1-2u_0}^1$ of $x, x_1, x_2,\ldots,x_{n-1-2u_0}$ in $Q[1]$ and distinguish two cases.

First settle the case that $x^1\not\in N[U_1]$. Using the $(0,2)$-property, observe that at most $2u_1$ neighbors of $x^1$ in $Q[1]$ may fall into $N[U_1]$. We can therefore without loss of generality assume that $x_1^1,x_2^1,\ldots,x_{n-1-2\ell}^1$ belong to $Q[1]\ominus U_1-N(U_0)$. Note that then their outer neighbors $x_1, x_2, \dots, x_{n-1-2\ell}$ belong to $Q[0]\ominus U_0-N(U_1)$, i.e., all these vertices are healthy.

Put $Y=\{x^1,x_1^1,x_2^1,\ldots,x_{n-1-2\ell}^1\}$ and note that the only faulty vertices in $Q[1]\ominus U_1$ are the outer neighbors of $U_0$ and there are at most $u_0$ of them. As $Q[1]\ominus U_1$ is $(n-1-2u_1)$-connected by the induction hypothesis and   
\[
|Y|+u_0=n-u_0-2u_1\le n-1-2u_1 ,
\]
using $u_0\ge \lceil \ell/2 \rceil\ge1$ to obtain the last inequality,
by Lemma~\ref{lem:fan} there exists a $(Y,y)$-fan in $Q[1]\ominus U_1-N(U_0)$, i.e.,  a family of $n-2\ell$ internally vertex-disjoint $(Y,y)$-paths $P_{x^1y}$, $P_{x_1^1y}$, $P_{x_2^1y}$, $\dots, P_{x_{n-1-2\ell}^1y}$. Then $P_0=(x,P_{x^1y})$ and
$P_i=(x,x_i,P_{x_i^1y})$ for $1\leq i\leq n-1-2\ell$ form the required family of
$n-2\ell$ internally disjoint $(x,y)$-paths in $Q_n\ominus U$.

It remains to settle the case that $x^1 \in N[U_1]$ which means that $u_1\ge1$ and some neighbor of $x^1$ in $Q[1]$ belongs to $U_1$. 
Then at most $2u_1-1$ neighbors of $x^1$ in $Q[1]$ may fall into $N[U_1]$, which means that we can without loss of generality assume that $x_1^1,x_2^1,\ldots, x_{n-2\ell}^1$ belong to $Q[1]\ominus U_1-N(U_0)$. By the same argument as in the previous case, there is a family of $n-2\ell$ internally disjoint $(x_i^1,y)$-paths $P_{x_i^1y}$, $1\le i \le n-2\ell$, in $Q[1]$ avoiding $N[U]$. Then $P_i=(x, x_i, P_{x_i^1y})$ for $1\leq i\leq n-2\ell$ form the required family of $n-2\ell$ internally disjoint $(x,y)$-paths in $Q_n\ominus U$.
\end{proof}
Note that the proof above may be viewed as a sketch of the proof for the case of $Q_n^k$ with $k\ge3$, which is based on similar ideas, only operates in a more general setting and the constructions are therefore more involved. The details are provided in the next section.
\section{$K$-ary $n$-cubes with $k\ge3$}
\label{sec:general}
We start with a lemma that constructs paths between endvertices in adjacent subcubes. It will be useful in the general case to compose paths between endvertices at an arbitrary position.
\begin{lemma}
\label{lem:adjacent}
Let $n,k\ge3$, $U\subseteq V(Q_n^k)$ be a set of $\ell$ faults, $0<\ell<n$, and $Q[j], Q[j']$ be adjacent subcubes of $Q_n^k$ such that $\kappa(Q[i]\ominus U_i)\ge 2n-2-2u_i$ where $U_i=U\cap V(Q[i])$ and $u_i=|U_i|$ for both $i\in\{j,j'\}$. Then for  arbitrary healthy vertices $x$ and $y$ of $Q[j]$ and $Q[j']$, respectively, there are at least $m$ internally disjoint $(x,y)$-paths passing only through healthy vertices of $Q[j]$ and $Q[j']$, where
\[
m = 
\begin{cases}
 2n-2\ell \quad &\text{if\ \,$u_j+u_{j'}<\ell $} \\
 2n-2\ell-1\quad &\text{if\  \,$u_j+u_{j'}=\ell $}\, .
\end{cases}
\] 
\end{lemma}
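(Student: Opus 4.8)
The plan is to follow the strategy used in the proof of Theorem~\ref{th:hypercube}, but now building \emph{two} disjoint paths per neighbor of $x$ (one going through the outer neighbor directly, one detouring through an interior neighbor first) and carefully bookkeeping the faults that straddle the two subcubes. First I would fix healthy vertices $x\in V(Q[j])$ and $y\in V(Q[j'])$, and let $x_1,\dots,x_{2n-2}$ be the neighbors of $x$ in $Q[j]$. By Lemma~\ref{lem:(0,2)} at most $2u_j$ of them lie in $N[U_j]$, so at least $2n-2-2u_j$ of them are healthy in $Q[j]\ominus U_j$; call this set $X$. Analogously, Lemma~\ref{lem:counting} (applied with $\ell$ faults, since $\ell<n$) tells us that among the neighbors of $x$ there are more than $h=2n-2-\ell-u_j-u_{j'}$ vertices $v$ such that both $v$ and its outer neighbor $v^{j'}$ are healthy; pick such a set $W$ of exactly $h+1$ (or as many as needed) neighbors of $x$.

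Next I would set up the target set in $Q[j']$. Let $Y=\{x^{j'}\}\cup\{w^{j'} : w\in W'\}$ for a suitable subset $W'\subseteq W$ chosen so that $|Y|=m$ and every vertex of $Y$ is healthy in $Q[j']$; the outer neighbors of faults in $U_j$ (at most $u_j$ of them) are the only additional faulty vertices inside $Q[j']\ominus U_{j'}$. Since $Q[j']\ominus U_{j'}$ is $(2n-2-2u_{j'})$-connected and $|Y|+u_j \le m+u_j \le 2n-2-2u_{j'}$ — this inequality is where the two cases $u_j+u_{j'}<\ell$ versus $u_j+u_{j'}=\ell$ come from, and it should be checked using $u_j+u_{j'}\le\ell<n$ together with $u_j\ge u_{j'}$ after a WLOG swap — Lemma~\ref{lem:fan}\eqref{lem:fan:fan} yields a $(Y,y)$-fan in $Q[j']\ominus U_{j'}$ avoiding the outer neighbors of $U_j$. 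I then prepend: the path from $x^{j'}$ gives $(x,x^{j'},\dots,y)$, and each path from $w^{j'}$ gives $(x,w,w^{j'},\dots,y)$. These are internally disjoint because the fan paths share only $y$, the vertices $w\in W'$ are distinct neighbors of $x$, and $x^{j'}$ differs from all the $w^{j'}$.

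The accounting of the "+1" in the case $u_j+u_{j'}=\ell$ mirrors the two cases in the hypercube proof (whether $x^{j'}\in N[U_{j'}]$ or not): when the cross-subcube fault budget is exhausted, $x^{j'}$ might already be faulty, in which case $x^{j'}$ is dropped from $Y$ but one extra healthy interior neighbor becomes available, keeping the count at $2n-2\ell-1$; and more generally a fault lying in $U_{j'}$ and adjacent to $x^{j'}$ is "double-counted" against both $x$'s neighbors in $Q[j]$ and $x^{j'}$'s neighbors in $Q[j']$, which is exactly what loses one path. The main obstacle I expect is precisely this bookkeeping: making sure that the sets $X$ (healthy neighbors of $x$ in $Q[j]$), the preimages under $v\mapsto v^{j'}$ of the healthy part of $Y$, and the avoidance set (outer neighbors of $U_j$) are simultaneously compatible, and that the connectivity budget $2n-2-2u_{j'}$ genuinely accommodates $|Y|$ plus the $u_j$ vertices we must route around — all while the edge $jj'$ may itself contribute cross faults counted in neither $u_j$ nor $u_{j'}$. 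Handling the small case $n=k=3$ separately (as in Lemma~\ref{lem:counting}) may also be needed since there $2n-2\ell$ can be as small as $2$ or $4$ and the regular structure $Q_2^3$ behaves specially.
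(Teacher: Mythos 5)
Your overall architecture (healthy neighbors of $x$, their healthy outer neighbors as a target set, a fan to $y$ in $Q[j']$, prepend $x$ or $(x,w)$) is indeed the paper's skeleton, but the proposal rests on a misreading of Lemma~\ref{lem:counting}: that lemma guarantees \emph{at least} $h$ good vertices among the neighbors of $x$, while the strict inequality ``$>h$'' refers to all of $Q[j]$, not to $N(x)$. This matters exactly in the case your bookkeeping paragraph tries to wave away, namely $x^{j'}\in N[U]$. Your claim that dropping $x^{j'}$ ``frees one extra healthy interior neighbor'' is false in general: for $k=3$ a fault $w\in U_{j'}$ adjacent to $x^{j'}$ still eliminates two neighbors of $x^{j'}$ (namely $w$ itself and the unique common neighbor of $w$ and $x^{j'}$), so nothing is freed. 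Concretely, take $n=k=3$, $\ell=2$, $u_j=u_{j'}=1$, with the fault in $Q[j]$ at distance two from $x$ killing two of its four neighbors and the fault in $Q[j']$ adjacent to $x^{j'}$ killing the outer neighbors of the other two; then $h=0$, $x$ has no usable neighbor, $x^{j'}$ is faulty, and your construction produces no path although $m=1$. The paper's proof closes precisely this hole (its Case~2) by using the strict inequality of Lemma~\ref{lem:counting} to find a healthy vertex $z\in V(Q[j])\setminus X$, \emph{not necessarily adjacent to $x$}, with $z^{j'}$ healthy, and routing an $(x,z)$-path inside $Q[j]\ominus U_j$ avoiding $X$; this is the only place the hypothesis $\kappa(Q[j]\ominus U_j)\ge 2n-2-2u_j$ for the subcube containing $x$ is needed, and your proposal never invokes it --- a telltale sign the hard case is missing.

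A second gap is the fan budget in $Q[j']$. The vertices to be avoided there are the outer neighbors of \emph{all} faults outside $U_{j'}$, i.e.\ of $U_j$ and of the faults in the other subcube adjacent to $Q[j']$, not just the $u_j$ you count. In the boundary situation $u_j=u_{j'}=0$, $\ell=1$, $k=3$ (the fault sits in the third subcube), you would need a fan onto $|Y|=m=2n-2$ targets while avoiding one further vertex of $Q[j']$, which requires $(2n-1)$-connectivity that $Q[j']\cong Q_{n-1}^k$ does not have, so Lemma~\ref{lem:fan} is not applicable. The paper's Case~3 resolves this by building only $2n-3$ paths through the fan and routing the last path entirely through $Q[j]$ to $y^{j}$, finishing with the edge $y^{j}y$; your scheme, which only ever enters $Q[j']$ through $x^{j'}$ or through neighbors of $x$, cannot produce that path. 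So the proposal captures the generic case but genuinely lacks the two detour constructions (the $(x,z)$-path inside $Q[j]$ and the $y^{j}$-path for $k=3$, $\ell=1$) that the paper's proof needs to reach the stated value of $m$ in all cases.
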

\begin{proof}
Suppose that $Q_{n}^k$ is partitioned into subcubes $Q[0],Q[1],\ldots, Q[k-1]$. We can     without loss of generality assume that $j=0$, $j'=1$ and $u_0\ge u_1$.
Observe that by Lemma~\ref{lem:counting} there are $h=2n-2-u_0-u_1-\ell$ healthy neighbors $x_1, x_2, \dots, x_{h}$ of $x$ in $Q[0]$ such that their outer neighbors $x_1^1, x_2^1, \dots, x_{h}^1$ in $Q[1]$ are healthy as well. Put 
$U_i=U\cap V(Q[i])$ and $u_i=|U_i|$ for all $i\in[k]$, $X=\{x_i\}_{i=1}^h$, $Y=\{x_i^1\}_{i=1}^h$ and consider three cases.  

\textsc{Case 1}. $u_0=\ell=n-1$.

Note that in this case we have $u_i=0$ for all $i\in[k]\setminus\{0\}$ while $m=1$. As $x\not\in U_0=U$, its outer neighbor $x^1$ in $Q[1]$ must be healthy.  Moreover, the only faulty vertices in $Q[1]$ are the $n-1$ outer neighbors of $U_0$. As $Q[1]$ is $(2n-2)$-connected and $2n-2-(n-1)>0$, by Lemma~\ref{lem:fan} there is an $(x^1,y)$-path $P_{x^1y}$ in $Q[1]$ avoiding $N[U]$. Then $P=(x, P_{x^1y})$ is the desired $(x,y)$-path  passing only through healthy vertices of $Q[0]$ and $Q[1]$.

\textsc{Case 2}. $1\le u_0\le n-2$.

By Lemma~\ref{lem:counting}, $Q[0]$ contains a healthy vertex $z\not\in X$ such that its outer neighbor $z^1$ in $Q[1]$ is healthy as well.  We claim that then $Q[0]$ contains an $(x,z)$-path $P_{xz}$ avoiding both faulty vertices and all vertices of $X$.

To prove the claim, note that if $x^1$ is healthy, it suffices to set $z:=x$, which means that $P_{xz}=P_{xx}=(x)$. On the other hand, the case that $x^1$ is faulty requires more attention. Then we have $z\ne x$ and $u_1+u_2\ge1$. Note that the only faulty vertices that may fall into $Q[0]\ominus U_0$ are at most $u_1+u_{k-1}$ outer neighbors of $U_1\cup U_{k-1}$ and 
\[
|X|+1+u_1+u_{k-1} = 2n-1-2u_0-\sum_{i=1}^{k-2}u_i\le 2n-2-2u_0 
\] 
(using $u_1+u_2\ge1$ to obtain the last inequality). As we assume that $Q[0]\ominus U_0$ is $(2n-2-2u_0)$-connected and $2n-2-2u_0>0$,  by Lemma~\ref{lem:fan} there is an $(x,z)$-path $P_{xz}$ in $Q[0]\ominus U_0$ avoiding both $N(U_1\cup U_{k-1})$ and $X$. This completes the verification of the claim.

Further, put $Y'=Y\cup\{z^1\}$ and note that all vertices of $Y'$ are healthy.  Moreover, the only faulty vertices that may fall into $Q[1]\ominus U_1$ are outer neighbors of $U_0$ and $U_2$ and therefore their number does not exceed $u_0+u_2$. Since $Q[1]\ominus U_1$ is $(2n-2-2u_1)$-connected by our assumption,
\[
|Y'|+u_0+u_2 = 2n - 1 - u_0 -2u_1 -\sum_{i=3}^{k-1}u_i\le 2n-2-2u_1
\] 
(using $u_0\ge1$  for the last inequality) and and $2n-2-2u_1>0$, by Lemma~\ref{lem:fan} there is a $(Y',y)$-fan in $Q[1]\ominus U_1-N(U_0\cup U_2)$, i.e.,  a family of $h+1$ internally disjoint paths $P_{x_1^1y},P_{x_2^1y}, \dots, P_{x_h^1y}, P_{z^1y} $, passing only through healthy vertices of $Q[1]$.  
It remains to set $P_i=(x,x_i,P_{x_i^1y})$ for $1\leq i\leq h$, $P_{h+1}=(P_{xz}, P_{z^1y})$ and observe that  then $\{P_i\}_{i=1}^{h+1}$ form a family of $h+1$ internally disjoint $(x,y)$-paths passing only through healthy vertices of $Q[0]$ and $Q[1]$. 
As 
\[
h+1=2n-u_0-u_1-\ell-1
\begin{cases}
\ge 2n-2\ell \quad &\text{if $u_0+u_1<\ell$} \\
= 2n-2\ell-1  \quad &\text{if $u_0+u_1=\ell$} \, ,	
\end{cases}
\] 
it follows that $h+1\ge m$ and we are done in this case.

\textsc{Case 3}. $u_0=0$.

Note that in this case we have $u_0=u_1=0<\ell$ and therefore $m=2n-2\ell$. Moreover, the only faulty vertices in $Q[1]$ are at most $l$ outer neighbors of $U_2$. Since $Q[1]$ is $(2n-2)$-connected and
\[
|Y|+\ell = 2n-2 ,
\] 
by Lemma~\ref{lem:fan} there is a $(Y,y)$-fan in $Q[1]-N(U_2)$  consisting of $h$ paths $P_{x_1^1y}$, $P_{x_2^1y}$, $ \dots, P_{x_{h}^1y}$. Set $P_i=(x,x_i,P_{x_i^1y})$ for $1\leq i\leq h$ and observe that then $\{P_i\}_{i=1}^{h}$ form a~family of 
\[
h=2n-2-\ell
\] 
internally disjoint $(x,y)$-paths passing only through healthy vertices of $Q[0]$ and $Q[1]$. If $\ell\ge2$, then $h\ge 2n-2\ell$ and we are done.

It remains to settle the case $\ell=1$. First assume that $k\ge4$. Then either $U_2$ or $U_{k-1}$ must be empty and we can without loss of generality assume the former, i.e. $u_2=0$ while $u_{k-1}\le1$. Note that then all vertices of $Q[1]$ must be healthy. 
Further, put $Y'=Y\cup\{x^1\}$ and note that as $Q[1]$ is $(2n-2)$-connected and
\[
|Y'| = h+1 =2n-2 ,
\] 
by Lemma~\ref{lem:fan} there is a $(Y',y)$-fan in $Q[1]$  consisting of $h+1$  paths $P_{x_1^1y},P_{x_2^1y}, \dots, P_{x_{h}^1y}, P_{x^1y} $. Put $P_i=(x,x_i,P_{x_i^1y})$ for $1\leq i\leq h$, $P_{h+1}=(x, P_{x^1y})$ 
and observe that then $\{P_i\}_{i=1}^{h+1}$ form the desired family of $h+1=2n-2\ell$ internally disjoint $(x,y)$-paths passing only through healthy vertices of $Q[0]$ and $Q[1]$.

If $\ell=1$ and $k=3$, then $u_2=1$ while both $x^1$ and $y^0$ are healthy.  
Moreover, if $y^0\in X$, we can without loss of generality assume that $y^0=x_h$.
Put $X'=X\setminus\{x_h\}$ and observe that as $|X|=h=2n-3 > 0$, we have $|X'|=h-1$.
Recall that the only faulty vertex in $Q[0]$ is the outer neighbor of $U_2$,  
\[
|X'|+2 =2n-2\,,
\] 
$Q[0]$ is $(2n-2)$-connected and $y^0\not\in N[U]\cup X'$. Consequently,  
by Lemma~\ref{lem:fan} there is an $(x,y^0)$-path $P_{xy^0}$ in $Q[0]$ avoiding both $N[U]$ and $X'$.

Further, put $Y'=(Y\setminus\{x^1_h\})\cup\{x^1\}$ and recall that the only faulty vertex in $Q[1]$ is the outer neighbor of $U_2$. Since $Q[1]$ is $(2n-2)$-connected and
\[
|Y'|+1 = 2n-2 ,
\] 
by Lemma~\ref{lem:fan} there is a $(Y',y)$-fan in $Q[1]-N(U_2)$  consisting of $h$  paths $P_{x_1^1y},P_{x_2^1y}, \dots, P_{x_{h-1}^1y}, P_{x^1y} $. Set $P_i=(x,x_i,P_{x_i^1y})$ for $1\leq i\leq h-1$, $P_{h}=(x, P_{x^1y})$, $P_{h+1}=(P_{xy^0}, y)$, and observe that  then $\{P_i\}_{i=1}^{h+1}$ form the desired family of 
\[
h+1=2n-2=2n-2\ell
\] 
internally disjoint $(x,y)$-paths passing only through healthy vertices of $Q[0]$ and $Q[1]$.
\end{proof}
Now we are ready to state an analogy of Theorem~\ref{th:hypercube} for $k$-ary $n$-cubes with $k\ge3$. 
\begin{theorem}\label{th2}
Let $n\ge2$, $k\geq $3 and $0\leq \ell\leq n$. Then for every set $U\subseteq V(Q_n^k)$ of $\ell$ faults we have 
\[
\kappa (Q_n^k\ominus U)\geq 2n-2\ell .
\]
\end{theorem}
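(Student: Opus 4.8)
The plan is to mirror the inductive argument of Theorem~\ref{th:hypercube} but in the richer setting of $k\ge3$, using Lemma~\ref{lem:adjacent} as the principal tool for assembling paths across subcubes. As in the hypercube case, the statement is trivial when $\ell\ge n$ (the bound is $\le0$) and follows from $\kappa(Q_n^k)=2n$ when $\ell=0$, so I would assume $0<\ell<n$. The base case $n=2$ must be checked directly: $Q_2^k$ is a $k\times k$ wrap-around mesh, which is $4$-connected, and one verifies that deleting the closed neighborhood of a single fault (so $\ell=1$) leaves a graph that is still $(2n-2)=2$-connected — this is a small planar/toroidal-grid computation, easiest done by exhibiting two internally disjoint paths between any two surviving vertices, or by invoking Lemma~\ref{lem:adjacent} with $n=2$ after confirming its hypotheses. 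For the inductive step, fix $U\subseteq V(Q_n^k)$ with $|U|=\ell$, $0<\ell<n$, and two healthy vertices $x,y$; by Menger's Theorem it suffices to produce $2n-2\ell$ internally disjoint healthy $(x,y)$-paths.

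The core idea is to choose a coordinate $d\in[n]$ and partition $Q_n^k$ into subcubes $Q[0],\dots,Q[k-1]\cong Q_{n-1}^k$. I would like to pick $d$ so that $x$ and $y$ land in distinct subcubes — say $x\in Q[0]$, $y\in Q[b]$ with $b\ne0$ — which is possible whenever $x$ and $y$ differ in some coordinate, i.e. always since $x\ne y$. Writing $U_i=U\cap V(Q[i])$ and $u_i=|U_i|$, the induction hypothesis applied to each $Q[i]$ gives $\kappa(Q[i]\ominus U_i)\ge 2(n-1)-2u_i=2n-2-2u_i$, which is precisely the hypothesis required by Lemma~\ref{lem:adjacent}. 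The plan then splits according to whether $x$ and $y$ lie in adjacent subcubes. If $Q[0]$ and $Q[b]$ are adjacent (which for $k=3$ is automatic, and for general $k$ can often be arranged by a suitable choice of $d$), Lemma~\ref{lem:adjacent} immediately yields $m\ge 2n-2\ell$ internally disjoint healthy $(x,y)$-paths (the $m=2n-2\ell-1$ alternative only occurs when $u_0+u_b=\ell$, and in that case I would need one extra path, obtained by routing through a third subcube free of faults, exploiting $\ell<n$ so that at least one $Q[i]$ with $i$ neighboring either $0$ or $b$ is fault-free). If $Q[0]$ and $Q[b]$ are not adjacent, I would route along a cyclic path of subcubes $Q[0],Q[1],\dots,Q[b]$ (or the other way around the cycle $\mathbb Z_k$), first using the Fan Lemma (Lemma~\ref{lem:fan}) inside $Q[0]$ to fan out from $x$ to a large healthy set of outer neighbors toward $Q[1]$, then relaying through the intermediate subcubes and finally landing at $y$ via $Q[b]$ — at each hop the number of surviving parallel strands is controlled by Lemma~\ref{lem:counting}, and one must track the faults carefully so that at no stage the count drops below $2n-2\ell$.

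The main obstacle I anticipate is bookkeeping the fault budget across the relay: when $x$ and $y$ are not in adjacent subcubes, one passes through $\ge2$ "walls" between subcubes, and each wall can be partly blocked by faults whose outer neighbors intrude into the adjacent subcube; making sure that the $2n-2\ell$ strands survive all of these transitions simultaneously — rather than losing a couple of strands at each of several walls — is the delicate part, and is exactly why choosing $d$ cleverly (to put $x,y$ in adjacent subcubes, or at least to balance the $u_i$) matters. A secondary subtlety is the degenerate regime where $\ell$ is close to $n-1$, so that $2n-2\ell$ is as small as $2$; there the inequalities from Lemma~\ref{lem:counting} are tightest, and one may need the refined count (the "$|H|>h$" slack, and the $Q_2^3$ special case treated there) to squeeze out the last path. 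I would organize the write-up to dispose of the adjacent-subcube case first via Lemma~\ref{lem:adjacent}, then reduce the non-adjacent case to it by an intermediate-subcube hop, keeping the arithmetic in the style of the Case~1--3 analysis already used in the proof of Lemma~\ref{lem:adjacent}.
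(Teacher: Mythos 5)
Your overall architecture matches the paper's proof: reduce to $0<\ell<n$, check $n=2$ directly on the wrap-around mesh, split $Q_n^k$ into subcubes so that $x$ and $y$ lie in distinct ones, feed the induction hypothesis into Lemma~\ref{lem:adjacent}, and, in the adjacent-subcube case with $u_0+u_1=\ell$, add one extra path through a third (automatically fault-free) subcube. That part of your plan is sound, modulo the detail that the third subcube, though free of faults of $U$, still contains up to $\ell$ vertices of $N[U]$ (outer neighbors of $U_0\cup U_1$), so the extra path needs the Fan-Lemma-with-faults argument (Lemma~\ref{lem:fan}) with the count $\ell\le n-1<2n-2$; also, Lemma~\ref{lem:adjacent} requires $n\ge3$, so it cannot be invoked for the base case as you suggest as an alternative.

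The genuine gap is the non-adjacent case, which you explicitly leave as ``the delicate part'' without supplying the two ideas that make it work. First, for the relays: with $X_i$ the entry vertices and $Y_i$ the exit vertices of an intermediate subcube $Q[i]$, Lemma~\ref{lem:fan} yields $2n-2\ell$ disjoint $(X_i,Y_i)$-paths only if $(2n-2-2u_i)-(u_{i-1}+u_{i+1})\ge 2n-2\ell$, i.e.\ only if $u_{i-1}+u_i+u_{i+1}<\ell$ (plus $u_i<\ell$); the paper secures this not by choosing the splitting coordinate $d$ or ``balancing the $u_i$'' (the coordinate in which $x$ and $y$ differ may be unique, so $d$ is forced), but by the observation that if some intermediate triple on the arc $Q[0],\dots,Q[t]$ carries $\ge\ell$ faults, one reroutes along the complementary arc $Q[0],Q[k-1],\dots,Q[t]$. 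Second, and more seriously, your stated goal that ``at no stage the count drops below $2n-2\ell$'' is unattainable when $u_0=\ell$ (all faults in $x$'s subcube): there Lemma~\ref{lem:adjacent} delivers only $m=2n-2\ell-1$ strands across the first wall, and no careful tracking recovers the missing path on that arc. The paper's fix is a separate subcase: build the $2n-2\ell-1$ paths through $Q[0],\dots,Q[t-1]$ into $Q[t]$, and construct one additional path the long way around the $\mathbb{Z}_k$ cycle, $x,x^{k-1},\dots,x^{t+2}$, then through $Q[t+1]$ (which contains at most $\ell$ vertices of $N[U]$) to a fresh healthy vertex $z^{t+1}$, entering $Q[t]$ at $z$ and finishing with a fan to $y$ inside the fault-free $Q[t]$. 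Your sketch mentions an extra wrap-around path only in the adjacent case, so this extremal configuration is not covered, and without these two mechanisms the inductive step does not close.
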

\begin{proof}
First note that it suffices to verify only the case that $0<\ell<n$, for otherwise the statement  of the theorem is either trivial (if $\ell = n$) or follows from $\kappa(Q_n^k)=2n$ (if $\ell=0$). 

We argue by induction on $n$.  To settle the case  $n=2$, recall that $Q_{2}^{k}$ is a $k\times k$ wrap-around mesh. If we delete an arbitrary vertex $v$ and its neighbors, the resulting  graph $Q_2^k\ominus v$ contains a~spanning subgraph, consisting of a $(k-1)\times (k-1)$ mesh $M$ and (if $k>3$) two paths of length $k-2$ whose endvertices are identified with vertices of $M$. Such a spanning subgraph is clearly 2-connected, which means that $\kappa(Q_2^k\ominus v)\ge2$ as claimed by the theorem. 
Assume that $n\ge3$ and that the statement of the theorem  holds for every set $U\subseteq V(Q_{n-1}^k)$, $0\le|U|\le n-1$.

Let $U\subseteq V(Q_n^k)$ be a set faults of size $\ell$. To prove that $\kappa (Q_n^k\ominus U)\geq 2n-2\ell$, we only need to show that for any distinct $x,y\in V(Q_n^k\ominus U)$ there are at least $2n-2\ell$ internally disjoint $(x,y)$-paths in $Q_n^k\ominus U$. Note that $Q_{n}^k$ can be partitioned into $k$ subcubes $Q[0],Q[1],\ldots, Q[k-1]$ such that $x$ and $y$ are in distinct subcubes. Put $U_i=U\cap V(Q[i])$  and $u_i=|U_i|$ for all $i\in[k]$.
Recall that by the induction hypothesis, $\kappa(Q[i]\ominus U_i)\ge2n-2-2u_i$ for all $i\in[k]$, which means that  the assumptions of Lemma~\ref{lem:adjacent} are satisfied and the lemma is applicable to any pair of adjacent subcubes.
Without loss of generality assume that $x\in V(Q[0])$ and distinguish two cases depending on whether $x$ and $y$ are in adjacent subcubes or not.

\textsc{Case 1}. Vertices $x$ and $y$ are in adjacent subcubes. 
We can without loss of generality assume that $y\in V(Q[1])$ and $u_0\ge u_1$.

By Lemma~\ref{lem:adjacent} there are $m$ internally disjoint $(x,y)$-paths $P_1,P_2,\dots,P_m$, passing only through healthy vertices of $Q[0]$ and $Q[1]$. If $u_0+u_1<\ell$, then $m=2n-2\ell$ and we are done. If $u_0+u_1=\ell$, which means that $m=2n-2\ell-1$, we need to construct an additional $(x,y)$-path in $Q_n^k\ominus U$, internally disjoint with $P_1, P_2, \dots, P_m$. 

Note that our assumption that $u_0+u_1=\ell$ implies $u_i=0$ for all $1<i\le k-1$.
As $y$ is a~healthy vertex of $Q[1]$, it follows that $y^0\not\in U_0$ and therefore the outer neighbor $y^2$ of $y$ in $Q[2]$ is also healthy. Similarly, as $x$ is a healthy vertex of $Q[0]$, all of $x^1,x^2,\dots, x^{k-1}$ must be healthy as well.
Further, as $Q[2]$ is $(2n-2)$-connected, it contains at most $\ell$ faulty vertices and $\ell \le n-1<2n-2$, by Lemma~\ref{lem:fan} there is an $(x^2,y^2)$-path $P_{x^2y^2}$ in $Q[2]$ avoiding $N[U]$. It remains to construct
an additional path 
\[
P_{m+1}=
\begin{cases}
(x,P_{x^2y^2},y)\quad & \text{if $k=3$}\\
(x, x^{k-1}, x^{k-2}, \dots, x^3, P_{x^2y^2},y) \quad & \text{if $k\ge4$} 
\end{cases}
\]
and observe that  then $\{P_i\}_{i=1}^{m+1}$ form the desired family of $m+1=2n-2\ell$ internally disjoint $(x,y)$-paths in $Q_n^k\ominus U$.

\textsc{Case 2}. Vertices $x$ and $y$ belong to non-adjacent subcubes. 

Note that then $k\ge4$. Without loss of generality assume that $y\in V(Q[t])$, $2\le t \le k-2$, and $u_0\ge u_t$. Further, note that Lemma~\ref{lem:counting} implies that each of the subcubes $Q[i]$
contains a healthy vertex $v_i$ for $1\le i \le t-1$. Put $v_0:= x$, $v_{t}:=y$, apply Lemma~\ref{lem:adjacent} to each pair of vertices $v_i, v_{i+1}$ in adjacent subcubes $Q[i], Q[i+1]$, $i\in[t]$ and from the resulting $(v_i,v_{i+1})$-paths $\{P^j_{v_iv_{i+1}}\}_{j=0}^m$ extract   
\begin{enumerate}[\upshape(1)]
	\item prefixes of maximum length fully contained in $Q[0]$ (for $i=0$), i.e., prefixes $\{P_{v_0v_{0j}}\}_{j=0}^m$ of $\{P^j_{v_0v_{1}}\}_{j=0}^m$, respectively, such that $v_{0j}$ is the first vertex of $P^j_{v_0v_{1}}$ whose successor in $P^j_{v_0v_{1}}$ belongs to $Q[1]$;
\item suffixes of maximum length fully contained in $Q[t]$ (for $i=t-1$), i.e., suffixes $\{P_{u_{tj}v_{t}}\}_{j=0}^m$ of $\{P^j_{v_{t-1}v_{t}}\}_{j=0}^m$, respectively, such that $u_{tj}$ is the last vertex of $P^j_{v_{t-1}v_{t}}$ whose predecessor in $P^j_{v_{t-1}v_{t}}$ belongs to $Q[t-1]$;
\item edges between $Q[i]$ and $Q[i+1]$ (for $1\le i \le t-2$), i.e., edges $\{v_{ij} u_{i+1j}\}_{j=0}^{m-1}$ formed by consecutive vertices of $\{P^j_{v_iv_{i+1}}\}_{j=0}^m$, respectively, such that $v_{ij}\in V(Q[i])$ while $u_{i+1j}\in V(Q[i+1])$.
\end{enumerate} 
Consequently, it follows that there are
\begin{enumerate}[\upshape(1)]
\item \label{th2:1}
internally disjoint $(x, v_{0j})$-paths $P_{xv_{0j}}$ for $j\in[m]$ passing only through healthy vertices of $Q[0]$ such that the outer neighbors $\{u_{1j}\}_{j=0}^{m-1}$ of $\{v_{0j}\}_{j=0}^{m-1}$ in $Q[1]$ are healthy as well;
\item \label{th2:2} 
internally disjoint $(u_{tj},y)$-paths $P_{u_{tj}y}$ for $j\in[m]$ passing only through healthy vertices of $Q[t]$ such that such that the outer neighbors $\{v_{t-1j}\}_{j=0}^{m-1}$ of $\{u_{tj}\}_{j=0}^{m-1}$ in $Q[t-1]$ are healthy as well;
\item \label{th2:3} 
healthy vertices $\{v_{ij}\}_{j=0}^{m-1}$ in $Q[i]$ for each $1\le i\le t-2$ such that their outer neighbors $\{u_{i+1j}\}_{j=0}^{m-1}$ in $Q[i+1]$ are healthy as well,
\end{enumerate}
where $m$ is defined in the following way: If there is an $i\in[t]$ such that $u_i+u_{i+1}=\ell$, then $m:=2n-2\ell-1$, otherwise $m:=2n-2\ell$. 

\textsc{Subcase 2.1}. $u_0<\ell$.

Recall that this assumption means that $u_t\le u_0<\ell$ as well. Hence we can assume that $u_{i-1}+u_i+u_{i+1}<\ell$ for all $0<i<t$, for otherwise the sequence $Q[0], Q[1], \dots, Q[t]$ of adjacent subcubes may be replaced with $Q[0], Q[k-1], Q[k-2], \dots, Q[t]$. Note that then $m=2n-2\ell$.

Now fix an $0<i<t$ and put $X_i=\{u_{ij}\}_{j=0}^{m-1}$, $Y_i=\{v_{ij}\}_{j=0}^{m-1}$. Observe that $Q[i]\ominus U_i$ is $(2n-2-2u_i)$-connected by the induction hypothesis, it contains at most $u_{i-1}+u_{i+1}<\ell-u_i$ faulty vertices, and
\[
2n-2-2u_i-(\ell-u_i-1)\ge 2n-2\ell ,
\]
using our assumption that $u_i<\ell$.
Hence by Lemma~\ref{lem:fan} there is a family of $m=2n-2\ell$ pairwise disjoint $(X_i,Y_i)$-paths passing only through healthy vertices of $Q[i]\ominus U_i$. Concatenating these paths for all $0<i<t$ with $\{P_{xv_{0j}}\}_{j=0}^{m-1}$ and $\{P_{v_{tj}y}\}_{j=0}^{m-1}$ over the respective endvertices, we obtain the desired family of $m=2n-2\ell$ internally disjoint $(x,y)$-paths in $Q_n^k\ominus U$. 

\textsc{Subcase 2.2}. $u_0=\ell$. 

Similarly as in the previous subcase, use \eqref{th2:1}-\eqref{th2:3} and Lemma~\ref{lem:fan} to set up a family of $m=2n-2\ell-1$ internally disjoint paths. However, this time construct only $(x,v_{t-1j})$-paths, $j\in[m]$, passing only through healthy vertices of $Q[0], Q[1], \dots, Q[t-1]$, where $\{v_{t-1j}\}_{j=0}^{m-1}$ are healthy vertices of $Q[t-1]$ such that their outer neighbors $\{u_{tj}\}_{j=0}^{m-1}$ in $Q[t]$ are healthy as well. Next, note that by Lemma~\ref{lem:counting} there are at least $2n-1-\ell-u_t-u_{t+1}=2n-\ell-1$ healthy vertices in $Q[t]$ such that their outer neighbors in $Q[t+1]$ are also healthy. Since
\[
2n-\ell-1>2n-2\ell-1=m \,,
\]
it follows that there is a $z\in V(Q[t])\setminus\{u_{tj}\}_{j=0}^{m-1}$ such that both $z$ and $z^{t+1}$ are healthy.

Next, note that the only faulty vertices in $Q[t+1]$ may be $\ell$ outer neighbors of $U_0$. As $Q[t+1]$ is $(2n-2)$-connected and $2n-2-\ell\ge n-1>0$, by Lemma~\ref{lem:fan} there is a $(x^{t+1},z^{t+1})$-path $P_{x^{t+1}z^{t+1}}$ passing only through healthy vertices of $Q[t+1]$.   

Finally, as $Q[t]$ contains no faulty vertices, it is $(2n-2)$-connected and $m+1=2n-2\ell\le2n-2$, by Lemma~\ref{lem:fan} there is a $(\{u_{tj}\}_{j=0}^{m-1} \cup \{z\}, y)$-fan in $Q[t]$, consisting of $(u_{tj},y)$-paths
$P_{u_{tj}y}$ for $j\in[m]$ and a $(z,y)$-path $P_{zy}$. It remains to set
$P_j=(P_{xv_{t-1j}},P_{u_{tj}y})$ for $j\in[m]$, $P_m=(x,x^{k-1},x^{k-2},\dots,x^{t+2},P_{x^{t+1}z^{t+1}},P_{zy})$
and conclude  that $P_0,P_1, \dots, P_m$ form the desired family of $m+1=2n-2\ell$ internally disjoint $(x,y)$-paths in $Q_n^k\ominus U$. The proof is complete.
\end{proof}
\section{Main results}
\label{sec:main}
\begin{theorem}
\label{thm:cayley}
Let $G=Cay(\Gamma,S)$ be an abelian Cayley graph of degree $\delta$. If there is an ordering $s_1,s_2, \dots,s_{\delta}$ of all generators in $S$ such that $s_{2i-1} \cdot s_{2i} \not\in S\cup\{e\}$ for all $1\le i \le \lfloor \delta/2 \rfloor$, then 
\[
\kappa_{NB}(G)\le\lceil \delta/2 \rceil \,.
\]	
\end{theorem}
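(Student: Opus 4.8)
The plan is to exhibit an explicit fault set $U$ of size $\lceil\delta/2\rceil$ witnessing that the survival graph $G\ominus U$ is disconnected (or complete/empty). The natural candidate, given the hypothesis, is to take $U$ to be a short "path-like" set built from the paired generators. Concretely, I would set $\delta'=\lfloor\delta/2\rfloor$, fix the identity $e$ as one reference vertex, and consider the $\delta'$ vertices $w_i := s_{2i-1}$ for $1\le i\le\delta'$ (or, if a single-vertex set already suffices, the degenerate case), together with possibly one extra vertex if $\delta$ is odd. The key point the hypothesis buys us is that $s_{2i-1}\cdot s_{2i}\notin S\cup\{e\}$, i.e. $s_{2i-1}$ and $s_{2i-1}\cdot s_{2i}^{-1}\cdot s_{2i}=\dots$ — more usefully, it tells us that certain products of the chosen generators are \emph{not} edges, which will be what forces disconnection after removing closed neighborhoods.

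First I would pin down exactly which vertices become faulty. Take $U=\{\,u_i := s_1\cdot s_2\cdots s_{2i-1}\ :\ 1\le i\le\delta'\,\}$ arranged so that consecutive $u_i$'s differ by two generators (hence are at distance $\le2$), making $U$ "connected enough" that $N[U]$ is a single clump. If $\delta$ is odd, add one more carefully chosen vertex. Then $N[U]$ has size at most $|U|\cdot(\delta+1)$, but that crude bound is irrelevant; what matters is that $V(G)\setminus N[U]$ splits. The second step is to identify two healthy vertices in different components. A clean way: show that the vertex $e$ (or a suitable translate) has all of its $\delta$ neighbors covered by $N[U]$ in the sense that every neighbor of $e$ lies in $N[U]$, so that $e$ is isolated in $G\ominus U$ — wait, $e$ itself must be healthy, so really I want a healthy vertex whose every neighbor is faulty, i.e. an isolated vertex of $G\ominus U$, OR two healthy vertices with every connecting path blocked. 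The condition $s_{2i-1}s_{2i}\notin S$ is precisely designed so that the "midpoint" construction leaves a healthy vertex $v$ with $N(v)\subseteq N[U]$: each neighbor $v\cdot s$ of $v$ equals some $u_i$ or a neighbor of some $u_i$, using that $v\cdot s_{2i-1}\cdot s_{2i}$ relations fail to be edges exactly when needed.

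The main obstacle I anticipate is the bookkeeping for the \textbf{odd-degree case} $\delta=2\delta'+1$: the unpaired generator $s_\delta$ creates a neighbor of the candidate healthy vertex that is not obviously covered, and one needs the extra fault in $U$ (bringing $|U|$ to $\delta'+1=\lceil\delta/2\rceil$) to absorb it, while checking that this extra vertex does not accidentally make $G\ominus U$ complete or empty in a way that would still be fine, or non-disconnected in a way that would not. A second subtlety is the abelian hypothesis: commutativity is what lets the products $s_1s_2\cdots s_{2i-1}$ be rearranged so that the "blocked neighbor" argument works uniformly; I would use it freely to reorder generator products. Finally, one should record the trivial boundary cases (e.g. $\delta$ small, or $G$ a cycle) where the bound may be attained vacuously or the survival graph is empty/complete by default, matching the exceptions flagged in the introduction.
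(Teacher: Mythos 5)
There is a genuine gap: you never produce a fault set that actually works, and the one candidate you do write down fails. Your set $U=\{u_i:=s_1\cdot s_2\cdots s_{2i-1}\}$ contains $u_1=s_1$, a neighbor of $e$, so $e$ itself becomes faulty immediately; worse, nothing in the hypothesis guarantees that the closed neighborhoods of these prefix products cover all neighbors of any particular healthy vertex, so no isolated vertex (nor any disconnection) is exhibited. The hypothesis $s_{2i-1}\cdot s_{2i}\notin S\cup\{e\}$ is not about ``certain products failing to be edges along your path-like set''; its role is precise and different. The working construction is to take the faults to be the products themselves, $v_i:=s_{2i-1}\cdot s_{2i}$ for $1\le i\le\lfloor\delta/2\rfloor$. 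Commutativity gives $v_i=s_{2i-1}\cdot s_{2i}=s_{2i}\cdot s_{2i-1}$, so $v_i$ is adjacent to \emph{both} $s_{2i-1}$ and $s_{2i}$, and the hypothesis guarantees $v_i\notin N[e]$, i.e.\ $v_i$ lies at distance exactly two from $e$. Hence removing $N[\{v_1,\dots,v_{\lfloor\delta/2\rfloor}\}]$ kills all paired generators (which are exactly the neighbors of $e$ when $\delta$ is even) while leaving $e$ healthy, so $e$ is isolated in the survival graph and $\kappa_{NB}(G)\le\lfloor\delta/2\rfloor$ in the even case. Your ``midpoint'' remark gestures at this but is never made concrete, and without it the proof does not get off the ground.

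The odd case, which you correctly flag as an obstacle, also needs an actual argument: if $s_\delta$ has a neighbor $v$ at distance two from $e$, add $v$ to $U$ (giving $|U|=\lceil\delta/2\rceil$), which makes $s_\delta$ faulty and again isolates $e$; if $s_\delta$ has no such neighbor, then every neighbor of $s_\delta$ lies in $N[e]$, so after removing $N[U']$ the component of $e$ is contained in $\{e,s_\delta\}$ and the survival graph is disconnected or complete with only $\lfloor\delta/2\rfloor$ faults. Note also that $e$ need not be required healthy a priori (your mid-proposal hesitation about this is a red herring); what matters is choosing faults at distance two from $e$ so that $e$ \emph{stays} healthy by construction. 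Finally, one must allow the outcome ``complete'' as well as ``disconnected'' (the component of $e$ could be the entire survival graph in small examples); you mention this possibility but do not handle it.
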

\begin{proof}
Select the vertex $e$ of $G$ (where $e$ is the identity element of $\Gamma$) and note that then $N(e)=\{s_1,s_2, \dots,s_{\delta}\}$. 
Let $T$ denote the set of all vertices of $G$ at distance two from $e$.
Note that for every $i\in\{1,2,\dots,\lfloor \delta/2\rfloor\}$ there is a vertex $v_i=s_{2i-1}\cdot s_{2i}$, which, by our assumption, does not belong to $N[e]$, and therefore it must fall into $T$. Moreover, as the group $\Gamma$ is abelian, we have $v_i=s_{2i-1}\cdot s_{2i}= s_{2i} \cdot s_{2i-1}$, which means that $v_i$ is adjacent to both $s_{2i-1}$ and $s_{2i}$. Let $U'$ be the set of all these (not necessarily distinct) vertices $v_1, v_2, \dots, v_{\lfloor \delta/2\rfloor}$. 
If $\delta$ is odd and $N(s_\delta)\cap T\ne\emptyset$, let $v$ be an arbitrary neighbor of $s_\delta$ in $T$. 
Put
\[
U=\begin{cases}
        U' \cup \{v\} &\text{if $\delta$ is odd and $N(s_\delta)\cap T\ne\emptyset$} \\
        U' &\text{otherwise}.
      \end{cases}
\]
Note that then we have $N[e]\setminus N[U]=\{e\}$ in the former case while $N[\{e,s_\delta\}]\setminus N[U]\in\{\{e\},\{e, s_\delta\}\}$ in the latter.   
It follows that $G\ominus U$ contains either an isolated vertex $e$, or a component consisting of $e$ and $s_\delta$ joined by an edge. Hence $G\ominus U$ is either disconnected or complete, and therefore  $\kappa_{NB}(G)\le|U|\le\lceil \delta/2 \rceil$.
\end{proof}

\begin{theorem}
Let $n\ge1$ and $k\ge2$. Then
\[
\kappa_{NB}(Q_n^k)=
\begin{cases} 
0 & \text{for $n=1$ and $2\le k\le3$}\\
2 & \text{for $n=1$ and $k\geq 6$}\\
\lceil n/2\rceil &\text{for $n\ge2$ and $k = 2$}\\ 
n &\text{otherwise} .
\end{cases}
\]
\end{theorem}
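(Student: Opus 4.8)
The plan is to pinch the claimed value between an upper bound supplied by Theorem~\ref{thm:cayley} and a lower bound supplied by Theorem~\ref{th:hypercube} (for $k=2$) or Theorem~\ref{th2} (for $k\ge3$), after first disposing of the degenerate case $n=1$. For $n=1$ the graph $Q_1^k$ is just the cycle $C_k$, with the conventions $C_2=K_2$ and $C_3=K_3$, so the asserted values are precisely the values of $\kappa_{NB}(K_n)$ and $\kappa_{NB}(C_n)$ recalled in the preliminaries: $0$ for $k\in\{2,3\}$, $1$ for $k\in\{4,5\}$ — which coincides with the entry ``$n$'' since $n=1$ — and $2$ for $k\ge6$.

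For the upper bound, fix $n\ge2$ and apply Theorem~\ref{thm:cayley} to $Q_n^k=Cay(\mathbb{Z}_k^n,\{e_i,e_i^{-1}\}_{i=1}^n)$, whose degree is $\delta=n$ for $k=2$ and $\delta=2n$ for $k\ge3$; thus $\lceil\delta/2\rceil$ equals $\lceil n/2\rceil$ and $n$ respectively, matching the claim. What must be checked is that the generators admit an ordering in which every consecutive pair sums outside $S\cup\{e\}$. The useful observation is that $e_i^{\pm1}\oplus e_j^{\pm1}$ is nonzero in two coordinates whenever $i\ne j$, hence never lies in $S\cup\{e\}$. For $k=2$ the generators are $e_1,\dots,e_n$ and the natural order already works. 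For $k\ge3$ the only same-coordinate pairing $\{e_i,e_i^{-1}\}$ is forbidden (its sum is $e$), so one needs a perfect matching of the $2n$ generators with no same-coordinate pair — equivalently, a perfect matching of the cocktail-party graph $K_{n\times2}$ — which exists precisely because $n\ge2$; an explicit one is obtained from the list $e_1,e_2,\dots,e_n,e_2^{-1},e_3^{-1},\dots,e_n^{-1},e_1^{-1}$ by taking the $n$ pairs of consecutive terms. Hence $\kappa_{NB}(Q_n^k)\le\lceil\delta/2\rceil$.

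For the matching lower bound I would show that whenever $U\subseteq V(Q_n^k)$ has fewer vertices than the claimed value, $Q_n^k\ominus U$ is connected, non-empty and not complete. Write $\ell=|U|$. When $k=2$ we have $\ell\le\lceil n/2\rceil-1$, so Theorem~\ref{th:hypercube} gives $\kappa(Q_n\ominus U)\ge\max\{n-2\ell,0\}\ge1$; when $k\ge3$ we have $\ell\le n-1$, so Theorem~\ref{th2} gives $\kappa(Q_n^k\ominus U)\ge2n-2\ell\ge2$. In either case $Q_n^k\ominus U$ is connected and has at least two vertices, hence connected and non-empty. To rule out completeness, combine the crude count $|V(Q_n^k\ominus U)|\ge k^n-\ell(\delta+1)$, which is at least $4$ for $k\ge3$ and at least $3$ for $k=2$ throughout $n\ge2$, with the fact that the clique number of $Q_n^k$ is at most $3$ (for $k\ge4$ the graph is triangle-free, while for $k=3$ the vertices of any triangle of $Q_n^3$ differ pairwise in one fixed coordinate and no further vertex is adjacent to all three). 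Since the number of surviving vertices exceeds the clique number, $Q_n^k\ominus U$ is not complete. Together with the upper bound this yields the stated value for every $n\ge2$.

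The substance of the argument is carried by Theorems~\ref{th:hypercube}, \ref{th2} and \ref{thm:cayley}; the rest is bookkeeping, and only two points need real care. First, constructing the generator ordering for $k\ge3$, where one must notice that such an ordering exists exactly when $n\ge2$ — precisely mirroring why $Q_1^k$ is an exceptional case. Second, the ``not complete'' verification, for which the connectivity estimate alone does not suffice and must be supplemented by the (elementary) bound on the clique number of $Q_n^k$ together with the vertex count; this is the step I expect to be the only mild obstacle.
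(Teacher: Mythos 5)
Your proposal is correct and follows essentially the same route as the paper: the $n=1$ case from the known values for cycles and complete graphs, the upper bound from Theorem~\ref{thm:cayley} via an ordering of the generators avoiding same-coordinate pairs, and the lower bound from Theorems~\ref{th:hypercube} and~\ref{th2} combined with the vertex count and the $K_4$-freeness (clique number at most $3$) of $Q_n^k$ to exclude completeness and emptiness. The only cosmetic difference is that you spell out an explicit generator ordering where the paper merely observes that any ordering separating $e_i$ from $e_i^{-1}$ works.
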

\begin{proof}
Recall that 
\[Q_1^k=
\begin{cases}
K_2&\quad \text{for $k=2$}\\
C_k&\quad \text{for $k\ge3$}	
\end{cases}
\]
and therefore the values of $\kappa_{NB}(Q_1^k)$ follow directly from the definition. 
Hence we can assume that $n\ge2$. 
Put $\delta=\delta(Q_n^k)$, i.e., $\delta=n$ for $k=2$ while $\delta=2n$ for $k\ge3$. Recall that $Q_n^k$ is an abelian Cayley graph $Cay(\mathbb{Z}_k^n,S)$ and we claim that it satisfies the assumption of Theorem~\ref{thm:cayley}. Indeed,  if $k=2$, then $S=\{e_i\}_{i=1}^n$ (because $e_i=e_i^{-1}$ for all $1\le i\le n$ in this case), and an~arbitrary ordering of the generators in $S$ satisfies the assumption of Theorem~\ref{thm:cayley}. If $k\ge3$, then $S=\{e_i, e_i^{-1}\}_{i=1}^n$ and any ordering such that $e_i$ and $e_i^{-1}$ are not next to each other satisfies the assumption  of Theorem~\ref{thm:cayley}. Hence by this theorem, $\kappa_{NB}(Q_n^k)\le\lceil \delta/2 \rceil$ for $n\ge2$. 

On the other hand, if $U$ is a subset of $V(Q_n^k)$ is size $\ell<\lceil \delta/2 \rceil$, then by Theorems~\ref{th:hypercube} and \ref{th2} $Q_n^k\ominus U$ remains connected. Moreover, as $Q_n^k$ is a $\delta$-regular graph, for $\ell<\lceil \delta/2\rceil $ we have 
\[
|V(Q_n^k\ominus U)|\ge k^n-(\delta+1)\ell\ge k^n-(\delta+1)(\lceil \delta/2\rceil-1)>3
\]
 provided that  $n\ge2$ and $k\geq 2$. As $k$-ary $n$-cubes are $K_4$-free graphs, it follows that  $Q_n^k\ominus U$ can never be a complete graph or empty in this case. Hence $\kappa_{NB}(Q_n^k)\ge\lceil \delta/2 \rceil$ for $n\ge2$ and the statement of the theorem follows.
\end{proof}

\begin{corollary}
\label{cor}
Let $n\ge1$, $k\ge2$ and $\delta=\delta(Q_n^k)$. Then 
$\kappa_{NB}(Q_n^k)=\lceil \delta/2\rceil$
unless 
\begin{itemize}
\item $Q_n^k$ is a cycle of length at least $6$, then $n=1$, $k\geq 6$ and $\kappa_{NB}(Q_n^k)=\delta/2+1$, or
\item $Q_n^k$ is a complete graph, then $n=1$, $2\le k\le3$ and $\kappa_{NB}(Q_n^k)=0$.
\end{itemize}
\end{corollary}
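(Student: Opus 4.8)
The plan is to obtain the corollary as a direct rearrangement of the preceding theorem, sorting its four cases according to whether $\kappa_{NB}(Q_n^k)$ agrees with $\lceil\delta/2\rceil$ or not. First I would record the degree in each regime: $\delta(Q_1^2)=1$, $\delta(Q_1^k)=2$ for $k\ge3$, $\delta(Q_n^2)=n$ for $n\ge2$, and $\delta(Q_n^k)=2n$ for $n\ge2$ and $k\ge3$.

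Next I would dispose of the generic range $n\ge2$. Here the preceding theorem gives $\kappa_{NB}(Q_n^2)=\lceil n/2\rceil=\lceil\delta/2\rceil$ when $k=2$, and $\kappa_{NB}(Q_n^k)=n=\lceil 2n/2\rceil=\lceil\delta/2\rceil$ when $k\ge3$; so for $n\ge2$ the value is always $\lceil\delta/2\rceil$ and no exceptional case arises. It then remains to treat $n=1$, where $Q_1^2=K_2$ and $Q_1^k=C_k$ for $k\ge3$, with $C_3=K_3$. For $k\in\{2,3\}$ the graph is complete, $\delta\in\{1,2\}$, and $\kappa_{NB}=0\ne1=\lceil\delta/2\rceil$, which is the second listed exception. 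For $k\in\{4,5\}$ one has $\kappa_{NB}(C_k)=1=\lceil\delta/2\rceil$, so there is no exception. For $k\ge6$ the graph is a cycle of length at least $6$ with $\kappa_{NB}(C_k)=2$, which equals $\delta/2+1$ since $\delta=2$; this is the first listed exception. Collecting these observations yields exactly the statement of the corollary.

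There is no substantive obstacle in this argument: once the main theorem is available it is a bookkeeping exercise. The only points requiring a little care are the small values of $k$ when $n=1$ — namely that $C_3$ coincides with $K_3$, that $C_4$ and $C_5$ are \emph{not} exceptions, and that the parity of $\delta$ makes $\lceil\delta/2\rceil$ and $\delta/2+1$ come out as the displayed values — together with the observation (already used in the proof of the main theorem) that for $n\ge2$ the graph $Q_n^k$ is $K_4$-free and has more than three vertices, so it is never complete there.
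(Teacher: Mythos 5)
Your proposal is correct and is essentially the argument the paper intends: the corollary is just a case-by-case rereading of the main theorem (using $\delta=n$ for $k=2$, $\delta=2n$ for $k\ge3$, and the values $\kappa_{NB}(K_2)=\kappa_{NB}(K_3)=0$, $\kappa_{NB}(C_k)$ for $k\ge4$), and your bookkeeping, including the non-exceptional cases $C_4$, $C_5$ and the parity check $\lceil\delta/2\rceil$ versus $\delta/2+1$, matches it.
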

\section{Concluding remarks}
Doty in \cite{D1} showed that the neighbor connectivity of abelian Cayley graphs of degree $\delta$ is bounded above by $\lceil\delta/2\rceil+2$, conjectured that the actual upper bound is  $\lceil\delta/2\rceil$ for all $\delta$-regular abelian Cayley graphs except cycles, and asked about Cayley graphs that achieve the maximum predicted neighbor connectivity. In this paper we partially answered the latter question by showing that a nontrivial class of abelian Cayley graphs, formed by the $k$-ary $n$-cubes, reaches the  maximum predicted value except for some trivial cases (Corollary~\ref{cor}).

It should be noted that there are abelian Cayley graphs whose neighbor connectivity is much lower than half of their degree \cite{D11}.
This limits possible generalizations of Theorems~\ref{th:hypercube} and \ref{th2}. On the other hand, the only abelian Cayley graphs known with larger neighbor connectivity are cycles, $\kappa_{NB}(C_n)= \delta(C_n)/2+1$ for $n\ge6$. It is therefore still possible that the upper bound provided by Theorem~\ref{thm:cayley} may be extended to all abelian Cayley graphs, thus providing a full solution to Doty's problem. 
\section*{Acknowledgments} 
This work was partially supported by the Czech Science Foundation Grant 19-08554S (Tom\'a\v{s} Dvo\v{r}\'ak), the China Postdoctoral Science Foundation Grant 2018M631322 (Mei-Mei Gu) and
OP RDE project CZ.02.2.69/0.0/0.0/16\_027/0008495 International Mobility of Researchers at Charles University (Mei-Mei Gu). 

\end{document}